\newlength{\mywidth}
\tikzset{join/.code=\tikzset{after node path={%
\ifx\tikzchainprevious\pgfutil@empty\else(\tikzchainprevious)%
edge[every join]#1(\tikzchaincurrent)\fi}}}
\makeatother\pagestyle{headings}
\tikzset{>=stealth',every on chain/.append style={join},
         every join/.style={->}}
\tikzstyle{labeled}=[execute at begin node=$\scriptstyle,
\def\ps@headings{%
\def\@oddhead{\mbox{}\scriptsize\rightmark \hfil \thepage}%
\def\@evenhead{\scriptsize\thepage \hfil \leftmark\mbox{}}%
\def\@oddfoot{}%
\def\@evenfoot{}}
\makeatother \pagestyle{headings}
\newtheorem{theorem}{Theorem}\setcounter{theorem}{0}
\newtheorem{corollary}[theorem]{Corollary}
\newtheorem{proposition}[theorem]{Proposition}
\newtheorem{example}[theorem]{Example}
\newtheorem{definition}[theorem]{Definition}
\newcommand{\setE}{ \mathcal{E} }
\newcommand{\setvar}{ \mathcal{E}(\Nfcyc) }
\newcommand{\setF}{ \mathcal{F} }
\newcommand{\beli}{\beta}
\newcommand{\vbeli}{ \bm{\beli} }
\newcommand{\tran}{ \mathsf{T} }
\newcommand{\herm}{ \mathsf{H} }
\newcommand{\cov}{ \mathrm{Cov} }
\newcommand{\corr}{ \mathrm{Corr} }
\newcommand{\vari}{ \mathrm{Var} }
\newcommand{\MP}{ \mathcal{M} }
\newcommand{\rCHSHbet}{ \mathrm{CorrCHSH}(\vbeli) }
\newcommand{\Nfcyc}{ \mathsf{N}_{1} } 
\newcommand{\NMkov}{ \mathsf{N}_{2} } 
\newcommand{\SNFG}{ \mathsf{N}_{3} } 
\newcommand{\NQFG}{ \mathsf{N}_{4} } 
\newcommand{\pr}{ \mathrm{Pr} }
\newcommand{\pn}{ p_{ \mathsf{N} } }
\newcommand{\vxf}{ \bm{x}_{\partial f} }
\newcommand{\vpf}{ p_{\mathsf{N}, i,j } }
\newcommand{\vpe}{ p_{\mathsf{N}, i } }
\newcommand{\tx}{ \tilde{x} }
\newcommand{\tvx}{ \tilde{\bm{x}} }
\newcommand{\tzero}{ \tilde{0} }
\newcommand{\tone}{ \tilde{1} }
\newcommand{\setx}{ \mathcal{X} }
\newcommand{\setxe}{ \setx_{e} }
\newcommand{\sety}{ \mathcal{Y} }
\newcommand{\vx}{ \bm{x} }
\newcommand{\vxI}{ \bm{x}_{\set{I}} }
\newcommand{\vxIc}{ \bm{x}_{\set{I}^{\mathrm{c}}} }
\newcommand{\vY}{ \bm{Y} }
\newcommand{\vy}{ \bm{y} }
\newcommand{\vq}{ \bm{q} }
\newcommand{\setL}{ \mathcal{LM}(\set{K}) }
\newcommand{\setLCHSH}{ \mathcal{LM}_{\mathrm{CHSH}}(\set{K}) }
\newcommand{\sR}{\mathbb{R}}
\newcommand{\sC}{\mathbb{C}}
\newcommand{\set}[1]{\mathcal{#1}}
\begin{document}
\bibliographystyle{IEEEtran}

\title{Sets of Marginals and Pearson-Correlation-based CHSH Inequalities for a Two-Qubit System}

\IEEEoverridecommandlockouts

\author{\mbox{} \\[-0.60cm]
  \IEEEauthorblockN{Yuwen~Huang and Pascal O.\ Vontobel}
  \IEEEauthorblockA{Department of Information Engineering \\
                    The Chinese University of Hong Kong \\
                    hy018@ie.cuhk.edu.hk, pascal.vontobel@ieee.org \\[-0.25cm]}
  \thanks{This work has been supported in part by the Research Grants Council of the Hong Kong Special Administrative Region, China, under Project CUHK 14209317 and Project CUHK 14207518.}
}
\maketitle
\IEEEpeerreviewmaketitle

\begin{abstract}
  Quantum mass functions (QMFs), which are tightly related to decoherence
  functionals, were introduced by Loeliger and Vontobel [IEEE Trans.\ Inf.\
  Theory, 2017, 2020] as a generalization of probability mass functions
  toward modeling quantum information processing setups in terms of factor
  graphs. 

  Simple quantum mass functions (SQMFs) are a special class of QMFs that do
  \emph{not explicitly} model classical random variables. Nevertheless,
  classical random variables appear \emph{implicitly} in an SQMF if some
  marginals of the SQMF satisfy some conditions; variables of the SQMF
  corresponding to these ``emerging'' random variables are called classicable
  variables. Of particular interest are jointly classicable variables.

  In this paper we initiate the characterization of the set of marginals given
  by the collection of jointly classicable variables of a graphical model and
  compare them with other concepts associated with graphical models like the
  sets of realizable marginals and the local marginal polytope.

  In order to further characterize this set of marginals given by the
  collection of jointly classicable variables, we generalize the CHSH
  inequality based on the Pearson correlation coefficients, and thereby prove
  a conjecture proposed by Pozsgay \textit{et al}. A crucial feature of this inequality
  is its nonlinearity, which poses difficulties in the proof.

%

\end{abstract}

\section{Introduction}


Graphical models like factor graphs~\cite{Kschischang2001, Forney2001,
  Loeliger2004a} have been used to represent various statistical models. In
the following, we will call a factor graph consisting only of non-negative
real-valued local functions a standard factor graph (S-FG). S-FGs have many
applications, in particular in communications and coding theory (see, e.g.,
\cite{Wymeersch2007, Richardson2008}) and statistical mechanics (see, e.g.,
\cite{Mezard2009}). In these applications, factor graphs frequently represent
the factorization of the joint probability mass functions (PMFs) of all the
relevant random variables. Quantities of interest can then be obtained by
exactly or approximately computing marginals of this joint PMF and suitably
processing these marginals.


Factor graphs have also been used to represent quantum-mechanical
probabilities~\cite{Loeliger2017, Loeliger2020}. In contrast to S-FGs, these
factor graphs consist of complex-valued local functions satisfying some
constraints. In the following, we will call such factor graphs
quantum-probability factor graphs (Q-FGs). A Q-FG is typically used to
represent the factorization of the joint quantum mass function (QMF) as
introduced in~\cite{Loeliger2017}.


In this paper, we first discuss similarities and differences between PMFs and
QMFs. Some of the features of QMFs will then motivate the study that is
carried out in the rest of this paper.


\section{PMFs vs.\ QMFs}


In this section, we highlight some similarities and crucial differences between
PMFs and QMFs. First, we consider a classical setup. In particular, we assume
that we are interested in a graphical model that represents the joint PMF
$P_{Y_1, \ldots, Y_n}(y_1, \ldots, y_n)$, where $Y_1, \ldots, Y_n$ are some
random variables of interest taking value in some alphabets
$\sety_1, \ldots, \sety_n$.\footnote{For simplicity, in the following all
  alphabets will be finite.} (In a typical application, we might have observed
$Y_1 = y_1, \ldots, Y_{n-1} = y_{n-1}$ and would like to estimate $Y_n$ based
on these observations.) In most applications, the PMF
$P_{Y_1, \ldots, Y_n}(y_1, \ldots, y_n)$ does \emph{not} have a ``nice''
factorization in terms of simple factors. However, frequently, with the
introduction of suitable auxiliary variables $x_1, \ldots, x_m$ taking values
in some alphabets $\setx_1, \ldots, \setx_m$, respectively, there is a function
$p(\vx, \vy)$, where $\vx := (x_1, \ldots, x_m)$ and
$\vy := (y_1, \ldots, y_n)$, such that
\begin{align*}
  p(\vx,\vy) 
    &\in \sR_{\geq 0} \quad \text{(for all $\vx, \vy$)} \ , \\
  \quad 
  \sum_{\vx,\vy} 
    p(\vx,\vy)
    &= 1 \ , \\
  \sum_{\vx} 
    p(\vx,\vy)
    &= P_{\vY}(\vy) \quad \text{(for all $\vy$)} \ ,
\end{align*}
and such that $p(\vx,\vy)$ has a ``nice'' factorization. (For example, in a
hidden Markov model, the joint PMF of the observations does not have a
``nice'' factorization, but the joint PMF of the hidden state process and the
observations has a ``nice'' factorization.) Note that the function
$p(\vx, \vy)$ can, thanks to its properties, be considered as a joint PMF of some
random variables $X_1, \ldots, X_m, Y_1, \ldots, Y_n$.


Second, we consider a quantum-mechanical setup. We assume, again, that we are
interested in a graphical model representing the joint PMF
$P_{Y_1, \ldots, Y_n}(y_1, \ldots, y_n)$, where $Y_1, \ldots, Y_n$ are some
random variables of interest taking values in some alphabets
$\sety_1, \ldots, \sety_n$. Such random variables can, for example, represent
the measurements obtained when running some quantum-mechanical experiment, and
we might be interested in estimating $Y_n$ based on the observations
$Y_1 = y_1, \ldots, Y_{n-1} = y_{n-1}$. As in the classical case, the PMF
$P_{Y_1, \ldots, Y_n}(y_1, \ldots, y_n)$ usually does \emph{not} have a
``nice'' factorization in terms of simple factors. Moreover, standard physical
modeling of quantum-mechanical systems shows that introducing a function
$p(\vx,\vy)$ as defined above does usually \emph{not} help toward obtaining a
function with a ``nice'' factorization. However, in many quantum-mechanical
setups of interest, with the introduction of suitable auxiliary variables
$x_1, \ldots, x_m, x'_1, \ldots, x'_m$ taking values in some alphabets
$\setx_1, \ldots, \setx_m$, $\setx'_1, \ldots, \setx'_m$ (with
$\setx'_i = \setx_i$, $i \in \{ 1, \ldots, m \}$), there is a function
$q(\vx,\vx',\vy)$, called quantum mass function (QMF)~\cite{Loeliger2017},
such that
\begin{align*}
  q(\vx,\vx',\vy) 
    &\in \sC \quad \text{(for all $\vx, \vx', \vy$)} \ , \\
  \quad 
  \sum_{\vx,\vx',\vy} 
    q(\vx,\vx',\vy)
    &= 1 \ , \\
  q(\vx,\vx',\vy)
    &\ \text{is a PSD kernel in $(\vx,\vx')$ for every $\vy$} \ , \\
  \sum_{\vx,\vx'} 
    q(\vx,\vx',\vy)
    &= P_{\vY}(\vy) \quad \text{(for all $\vy$)} \ ,
\end{align*}
and such that $q(\vx,\vx',\vy)$ has a ``nice'' factorization. The major
difference between $p(\vx,\vy)$ and $q(\vx,\vx',\vy)$ is the fact that the
former takes value in $\sR_{\geq 0}$, whereas the latter takes value in
$\sC$. In particular, $\sum_{\vy} q(\vx,\vx',\vy)$ is in general not a PMF
over $(\vx, \vx')$, thereby showing that $\vx,\vx'$ \emph{cannot} be
considered as random variables. (See~\cite{Loeliger2017} for more details.)


In~\cite{Loeliger2020}, the authors discussed an approach to QMFs where $\vy$
does \emph{not} appear explicitly anymore, but ``emerges'' from a QMF. More
precisely, they first introduced a simple quantum mass function (SQMF)
$q(\vx,\vx')$ that satisfies
\begin{align*}
  q(\vx,\vx') 
    &\in \sC_{\geq 0} \quad \text{(for all $\vx, \vx'$)} \ , \\
  \quad 
  \sum_{\vx,\vx'} 
    q(\vx,\vx')
    &= 1 \ , \\
  q(\vx,\vx')
    &\ \text{is a PSD kernel in $(\vx,\vx')$} \ .
\end{align*}
Afterwards, they defined ``classicable'' variables.


\begin{definition}\label{Sec0:def:3}
  Let $\set{I}$ be a subset of $\{ 1, \ldots, m \}$ and let
  $\set{I}^{\mathrm{c}} := \{ 1, \ldots, m \} \setminus \set{I}$ be its
  complement. The variables $\vx_{\set{I}}$ are called jointly classicable if
  the function
  \begin{align*}
    q(\vxI,\vxI') 
      &:= \sum_{\vxIc,\vxIc'} 
            q(\vx,\vx')
  \end{align*}
  is zero for all $(\vxI,\vxI')$ satisfying $\vxI \neq \vxI'$.\footnote{It
    would be more precise to call this function $q_{\set{I}}$. However, for
    conciseness, we drop the index $\set{I}$ as it can be inferred from the
    arguments.}
\end{definition}


Note that if $\vxI$ are jointly classicable, then one can define the function
$p(\vxI) := q(\vxI,\vxI)$, for which it is straightforward, thanks to the
properties of SQMFs, to show that it is a PMF. It is in this sense that random
variables $y_1, \ldots, y_n$ that were omitted when going from QMFs to SQMFs
can ``emerge'' again.\footnote{Note that there is a strong connection of SQMFs
  to the so-called decoherence functional~\cite{GellMann1989, Dowker1992}, and
  via this also to the consistent-histories approach to quantum
  mechanics~\cite{Griffiths2002}. However, the starting point of our
  investigations is quite different.}


\begin{definition}
  Let $\set{K}$ be a collection of subsets $\set{I}$ of
  $\{ 1, \ldots, m \}$ such that $\vxI$ is classicable.
\end{definition}


\begin{example}
  Consider the Q-FG $\NQFG$ in
  Fig.~\ref{Sec1:fig:1}, whose global function is an SQMF. In that Q-FG,
  the matrix $\rho$ represents a PSD matrix and the matrices $U_1$, $U_2$ are unitary matrices. One can show that for all choices of $\rho$, $U_1$, and $U_2$, the collection
  $\set{K}$ can be chosen to contain the sets $\{ 1, 2 \}$, $\{ 1, 4 \}$,
  $\{ 2, 3 \}$, and $\{ 3, 4 \}$.
\end{example}


Interestingly enough, the collection of functions
$\bigl\{ p(\vxI) \bigr\}_{\set{I} \in \set{K}}$ is usually such that there is
\emph{no} PMF $p(\vx)$ such that for every $\set{I} \in \set{K}$, the function
$p(\vxI)$ can be obtained as a marginal of $p(\vx)$.\footnote{A similar
  observation is at the origin of the so-called ``single-framework'' rule in
  the consistent-histories approach to quantum mechanics.} In general, we can
only guarantee that for two sets $\set{I}_1, \ \set{I}_2 \in \set{K}$ the
following consistency constraint holds:
\begin{align*}
  \sum_{\vx_{\set{I}_1 \setminus \set{I}_2}} p(\vx_{\set{I}_1})
    &= \sum_{\vx_{\set{I}_2 \setminus \set{I}_1}} p(\vx_{\set{I}_2})
         \quad \text{(for all $\vx_{\set{I}_1 \cap \set{I}_2}$)} \ .
\end{align*}


Let us comment on these special properties of
$\bigl\{ p(\vxI) \bigr\}_{\set{I} \in \set{K}}$:
\begin{itemize}

\item It turns out that these special properties of
  $\bigl\{ p(\vxI) \bigr\}_{\set{I} \in \set{K}}$ are at the heart of quantum
  mechanical phenomena like Hardy's paradox~\cite{Hardy1992} and the
  Frauchiger--Renner paradox~\cite{Frauchiger2018}.\footnote{For a discussion
    of the latter in terms of SQMFs, see~\cite{Loeliger2020}.} In fact, the
  Q-FG $\NQFG$ in Fig.~\ref{Sec1:fig:1} can be used to analyze Hardy's
  paradox. On the side, note that $\NQFG$ also captures the essence of Bell's game~\cite{Gisin2014}.

\item Interestingly, these special properties of
  $\bigl\{ p(\vxI) \bigr\}_{\set{I} \in \set{K}}$ are very similar to the
  properties of the beliefs in the local marginal polytope of an S-FG~(see, e.g.,
  \cite{Wainwright2008}).\footnote{Local marginal polytopes are of relevance,
    for example, when characterizing locally operating message-passing
    iterative algorithms like the sum-product algorithm~\cite{Yedidia2005,
      Vontobel2013}.}

\end{itemize}


The above observations motivate the systematic study of the collection
$\bigl\{ p(\vxI) \bigr\}_{\set{I} \in \set{K}}$ for a given SQMF. Indeed, one
key contribution of this paper is to study this collection for the Q-FG 
$\NQFG$ in Fig.~\ref{Sec1:fig:1} and compare this collection with other 
objects that can be associated with this Q-FG.


\section{Contributions}


\begin{figure}[t]
	\centering
	\begin{minipage}[b]{0.15\textwidth}
		\centering
		\begin{tikzpicture}[node distance=1.5cm, on grid,auto]
  \tikzstyle{state}=[shape=rectangle, draw, minimum width=0.4cm, minimum height = 0.4cm,outer sep=-0.3pt]
  \begin{pgfonlayer}{main}
    \node[state] (f0) at (0,0) [label=above:\scriptsize $f_{ 1, 4 }$] {};
    \node[state] (f1) at (1,0) [label=above: \scriptsize $f_{ 1, 2 }$] {};
    \node[state] (f2) at (1,-1) [label=below: \scriptsize $f_{ 3, 2 }$] {};
    \node[state] (f3) at (0,-1) [label=below: \scriptsize $f_{ 3, 4 }$] {};
  \end{pgfonlayer}
  \begin{pgfonlayer}{behind}
    \draw
      (f0) -- node[above] {\scriptsize $ x_{1} $}  (f1) 
        -- node[right] {\scriptsize $ x_{2} $} (f2) 
        -- node[below] {\scriptsize $ x_{3} $} (f3)
        -- node[left] {\scriptsize $ x_{4} $} (f0);
  \end{pgfonlayer}
\end{tikzpicture}
		\caption{The S-NFG $\Nfcyc$.} \label{Sec:3:fig:1}	
	\end{minipage}
	\begin{minipage}[b]{0.15\textwidth}
		\centering
		\hspace*{-0.8cm}
		\begin{tikzpicture}[node distance=1.4cm, on grid,auto]
  \tikzstyle{state}=[shape=rectangle, draw, minimum width=0.4cm, minimum height = 0.4cm,outer sep=-0.3pt]
  \begin{pgfonlayer}{main}
    \node (f0) at (0,0) [] {};
    \node[state] (f1) at (1,0)
    [label=above: \tiny $ M_{ X_{4} | X_{1} } $] {};
    \node[state] (f2) at (2,0)
    [label=above: \tiny 
    $ M_{X_{1}\mathrm{,}X_{2}} $] {};
    \node[state] (f3) at (3,0)
    [label=above: \tiny $ M_{ X_{3} | X_{2} } $] {};
    \node (f4) at (4,0) [] {};
  \end{pgfonlayer}
  \begin{pgfonlayer}{behind}
    \draw
      (f0) -- node[above] {\scriptsize $ x_{4} $}  (f1) 
        -- node[above] {\scriptsize $ x_{1} $} (f2) 
        -- node[above] {\scriptsize $ x_{2} $} (f3)
        -- node[above] {\scriptsize $ x_{3} $} (f4);
  \end{pgfonlayer}
\end{tikzpicture}
		\caption{The S-NFG $ \NMkov $.}
		\label{Sec0:fig:1}
	\end{minipage}
	\begin{minipage}[b]{0.15\textwidth}
	    \centering
	    \begin{tikzpicture}[node distance=1.5cm, on grid,auto]
  \tikzstyle{state}=[shape=rectangle, draw, minimum width=0.4cm, minimum height = 0.4cm,outer sep=-0.3pt]
  \begin{pgfonlayer}{main}
    \node[state] (f0) at (0,0) [label=above: \scriptsize $f$] {};
    \node (f1) at (0.8,0.8) [] {};
    \node (f2) at (0.8,-0.8) [] {};
    \node (f3) at (-0.8,-0.8) [] {};
    \node (f4) at (-0.8,0.8) [] {};
  \end{pgfonlayer}
  \begin{pgfonlayer}{behind}  
    \foreach \x in {1,4}{
      \draw (f0) --node[above]{\scriptsize$x_{\x}$} (f\x);
    }
    \foreach \x in {2,3}{
      \draw (f0) --node[below]{\scriptsize$x_{\x}$} (f\x);
    }
  \end{pgfonlayer}
\end{tikzpicture}
	    \caption{The S-NFG $ \SNFG $.\label{Sec1:fig:10}}
  	\end{minipage}

	\vspace*{0.2cm}
    \begin{minipage}[t]{0.22\textwidth}
    	\centering
    	\hspace*{-0.6cm}
      	\begin{tikzpicture}[node distance=0.6cm, on grid,auto]
	\input{Figures/Fig_4A_Head};
	\input{Figures/Tikz_rho};
  	\begin{pgfonlayer}{main}
	  	\foreach \x/\xpm/\xin in {0/-1/1, 1/1/2}{
		  	\node[state1] (U\x) at ( \disbox, 
		  	3 * \scale - \x * 4 * \scale  ) {\scriptsize$U_{\xin}$};
		  	\node[state1] (UH\x) at ( \disbox, 
		  	\scale - \x * 4 * \scale ) {\scriptsize$U_{\xin}^{\herm}$};	
		  	 \node [dot=black] at (U\x.east) {};
		    \node [dot=black] at (UH\x.west) {};
		  	\foreach \y/\ypm in {1/-1,3/1}{
		        \ifthenelse{\x = 0}{
		            \pgfmathtruncatemacro\yin{\y} 
		        }{
		            \pgfmathtruncatemacro\yin{\y+1}
		        }
		        \node (x\x\y) at  ( \y*\disbox/2 +\ypm*\Ws/2-\ypm*\Wss/2, 
		        3 * \scale - \x * 4 * \scale + 2*\vergap ) [] 
		        {\scriptsize$ x_{\yin} $}; 
		        \node (xp\x\y) at ( \y*\disbox/2 +\ypm*\Ws/2-\ypm*\Wss/2, 
		        \scale - \x * 4 * \scale + 2*\vergap ) [] 
		        {\scriptsize$ x_{\yin}' $};
		    }
		    \node (px\x) at  ( 1.5*\disbox, - \xpm * 3.75*\Ws  ) [] {};
	  	}
  	\end{pgfonlayer}
  	\input{Figures/Fig_4A_basic};
\end{tikzpicture}
      	\caption{The Q-NFG $ \NQFG $.\label{Sec1:fig:1}}
    \end{minipage}
    \begin{minipage}[t]{0.22\textwidth}
    	\centering
    	\hspace*{-0.9cm}
      	\begin{tikzpicture}[node distance=0.6cm, on grid,auto]
	    	\input{Figures/Fig_7.tex}
		    \input{Figures/Tikz_rho};
		    \input{Figures/Fig_4A_basic};
		    \tikzstyle{state_dash1}=[shape=rectangle, draw, dashed, minimum width= 2.8*\disbox cm, minimum height = 7.4*\scale cm,outer sep=-0.3pt]
		    \begin{pgfonlayer}{behind}
		        \node[state_dash1] (dash1) at (\disbox, 0) [] {};
		        \node (exp0) at  ( -0.3*\scale, 5.9*\Ws ) []  
		        {\scriptsize$ \beli_{i,j}(x_{i}, x_{j}) $}; 
		    \end{pgfonlayer}
	    \end{tikzpicture}
      	\caption{The Q-NFG representation of 
      	$ \beli_{i,j}(x_{i},x_{j}) $.\label{Sec1:fig:7}}
    \end{minipage}
	
	\vspace*{0.2cm}
  	\begin{minipage}[t]{0.45\textwidth}
  		\centering
  		\vspace*{0.1cm}
  		\begin{tikzpicture}[]
  			\input{Figures/Fig_6.tex}
		  	\input{Figures/Fig_5.tex}
		\end{tikzpicture}
		\caption{The Venn diagram for $ \MP(\Nfcyc) $, $ \MP(\NMkov) $, $ \MP(\set{K}) $, $ \MP(\NQFG) $, and $ \setL $. \label{Sec1:fig:4} } 
	\end{minipage}
\end{figure}

To better understand classicable variables' marginals, we define the set
$ \MP(\NQFG) $, which is the set of the marginals created by the classicable
variables in the two-qubit system $ \NQFG $, as shown in Fig.~\ref{Sec1:fig:1}.  One of our paper's main topics is to fully characterize $ \MP(\NQFG) $. For comparison, we introduce $ \setL $ (the local
marginal polytope of the S-FG $ \Nfcyc $ in Fig.~\ref{Sec:3:fig:1}),
$ \MP(\Nfcyc) $ (the set of realizable
marginals of $ \Nfcyc $), $ \MP( \NMkov ) $ (the set of realizable
marginals of the Markov chain $ \NMkov $ in Fig.~\ref{Sec0:fig:1}), and $ \MP(\SNFG) $ (the set of realizable marginals of $ \SNFG $ in Fig.~\ref{Sec1:fig:10}). We have the following results.

\begin{itemize} 
	 
  \item We prove the Venn diagram in Fig.~\ref{Sec1:fig:4} by showing that each part in the diagram is non-empty. We can see that $ \MP(\SNFG) $  and $ \MP(\NQFG) $ are strict subsets of
  $ \setL $; both $ \MP( \Nfcyc ) $ and $ \MP(\NMkov) $ have
  marginals that are not in $ \MP(\NQFG) $; the set $ \MP(\NQFG) $ consists of
  marginals that are not compatible with any joint PMF.

  \item We generalize the Clauser-Horne-Shimony-Holt (CHSH) inequality~\cite{PhysRevLett.23.880} for Pearson correlation coefficients (PCCs), which resolves a conjecture proposed in~\cite{Victor2017}. Because PCCs are non-linear functions with respect to marginals, the inequality has a non-trivial proof.
  We suspect that the proof approach is applicable for proving other non-linear Bell inequalities. A violation of this inequality indicates that the associated marginals are not in $\MP(\SNFG) $. 

  \item We illustrate Hardy's paradox, Bell's game, and the maximum quantum violation of the PCC-based CHSH inequality by the classicable variables in $ \NQFG $ in Fig.~\ref{Sec1:fig:1}.
 
\end{itemize}

Besides these specific results, our paper is, more generally, about
leveraging tools from factor graphs to understand certain quantities of
interest in quantum information processing. In particular, given that factor
graphs have been proven very useful in classical information processing, but can also be used for doing quantum information processing, they allow one to
understand and appreciate the similarities and the differences between
classical and quantum information processing.

The rest of this paper is structured as follows. Section~\ref{Sec:2} reviews
some basics of S-FGs. In particular, Section~\ref{Sec:7} proves the
PCC-based CHSH inequality, and Section~\ref{Sec:8} discusses the Markov chain in Fig.~\ref{Sec0:fig:1}. Section~\ref{Sec:3} introduces $ \NQFG $, proves the Venn diagram in Fig.~\ref{Sec1:fig:4} and illustrates the maximum quantum violation of the PCC-based CHSH inequality. Many details are left out due to space constraints; a
more detailed discussion is given in~\cite{Huang2021}.

\subsection{Basic Notations and Definitions}


The sets 
$\mathbb{R}$, $\mathbb{R}_{\geq 0}$, $\mathbb{R}_{>0}$, and $\mathbb{C}$ denote the field of real numbers, the set of nonnegative real numbers, the set of positive real numbers, and the field of complex numbers,
respectively. An overline denotes complex conjugation. For any statement $ S $, by the Iverson's convention, the function $ [S] $ is defined to be $ [ S ] := 1 $ if $ S $ is true and $ [ S ] := 0 $ otherwise. 


\section{Standard Normal Factor Graphs (S-NFGs)~\label{Sec:2}}


In this section, we review some basic concepts and properties of an S-NFG. The word ``normal'' refers to the fact that variables are arguments of only one or two local functions.  We use an example to introduce the fundamental concepts of an S-NFG first.


\begin{example}~\cite{Kschischang2001, Loeliger2004a} 
  Consider the multivariate function
    \begin{align*}
      & g_{\Nfcyc}(x_{1},\ldots,x_{4}) \\ 
      &\quad : = f_{1,2}(x_{1}, x_{2}) 
      \cdot f_{1,4}(x_{1}, x_{4}) \cdot f_{3,2}(x_{3}, x_{2}) 
      \cdot f_{3,4}(x_{3}, x_{4}),
    \end{align*}
    where $g_{\Nfcyc}$, the
    so-called global function, is defined to be the product of the so-called
    local functions $f_{1,2}$, $f_{1,4}$, $f_{3,2}$ and $f_{3,4}$. We can visualize the
    factorization of $g$ with the help of the S-FG $\Nfcyc$
    in Fig.~\ref{Sec:3:fig:1}. Note that the S-FG $\Nfcyc$ 
    consists of four function nodes
    $f_{1,2}, \ldots, f_{3,4}$ and four (full) edges with associated variables $x_{1}, \ldots, x_{4}$.
\end{example}

For an S-NFG, a half edge is an edge incident on one function node only and a full edge is an edge incident on two function nodes. 

\begin{definition}
  The S-NFG $\mathsf{N}(\setF( \mathsf{N} ), \setE( \mathsf{N} ), \setx( \mathsf{N} ))$ consists of:
  \begin{enumerate}
    
  \item The graph $( \setF( \mathsf{N} ),\setE( \mathsf{N} ) )$ with vertex set $\setF( \mathsf{N} )$ and edge set $\setE( \mathsf{N} )$, where
   $\setE( \mathsf{N} )$ consists of all full edges and half edges in
   $\mathsf{N}$. With some slight abuse of notation, an $f \in \setF( \mathsf{N} )$ will denote a function node and the corresponding local function.

  \item The alphabet
    $\setx( \mathsf{N} ) :=\prod_{e\in \setE( \mathsf{N} )}\setx_{e}$,
    where $\setx_{e}$ is the alphabet associated with the edge
    $e\in \setE( \mathsf{N} )$. 

  \end{enumerate}
\end{definition}

\begin{definition}\label{Sec0:def:2}	
	Given $ \mathsf{N}(\setF( \mathsf{N} ), \setE( \mathsf{N} ), \setx( \mathsf{N} )) $, we make the following definitions:
  	\begin{enumerate}

	  	\item For every function node $f\in \setF( \mathsf{N} )$, the set $\partial f$ is the set of edges incident on $f$.


	  	\item An assignment
	    $\vx:=(x_{e})_{e\in \setE( \mathsf{N} )}\in \setx( \mathsf{N} )$ is
	    called a configuration of the S-NFG. 

	  	\item The local function $f$ associated with function node
	    $f\in \setF( \mathsf{N} )$ denotes an arbitrary mapping $  f: \prod_{e \in \partial f} \setx_{e} \to \mathbb{R}_{\geq 0}. $
	   
	    \item The global function is $ g_{ \mathsf{N} }(\vx) :=
		 	\prod_{ f\in \setF(\mathsf{N}) } 
		 	f( \vxf ). $

		\item The partition function is $ Z( \mathsf{N} ) := \sum_{ \vx } g_{ \mathsf{N} }(\vx), $
		where $ \sum_{ \vx } $ denotes $ \sum_{ \vx \in \setx( \mathsf{N} ) } $.

		\item The PMF induced on $ \mathsf{N} $ is 
		 $ \pn( \vx ):= g_{ \mathsf{N} }(\vx)/ Z( \mathsf{N} ) . $

		\item Let $\set{I}$ be a subset of $\setE( \mathsf{N} )$ and let
      $\set{I}^{\mathrm{c}} := \setE( \mathsf{N} ) \setminus \set{I}$ be its
      complement. The marginal $ p_{\mathsf{N},\set{I}}( \vx_{\set{I}}) $ is defined to be $ p_{\mathsf{N},\set{I}}( \vx_{\set{I}}) 
      := \sum_{ \vxIc  } 
      \pn( \vx )$.
    \end{enumerate}
\end{definition}

\begin{definition}
  Considering $ \mathsf{N} \in \{ \Nfcyc, \NMkov, \SNFG \} $, we make the following definitions:
  \begin{enumerate}
    \item The alphabet $ \setxe $ is $ \setxe := \{0, 1\} $ for all $ e \in \setE( \mathsf{N} ) $.

    \item The set $ \set{K} $ is
                  $ \set{K} := \{ \{1,2\}, \{1,4\}, \{ 2,3 \}, \{ 3,4 \} \} $.

    \item For $ \{ i, j \} \in \set{K} $, the marginal $ \vpf $  is defined to be a $ | \setxe | $-by-$ | \setxe | $ matrix with the entry $ \vpf( x_{i}, x_{j} ) $ and the marginal $ \vpe $ is defined to be a $ | \setxe | $-by-$ | \setxe | $ diagonal matrix with $ \vpe( x_{i} ) $ being the $ x_{i} $-th diagonal term.
    

    \item The collection of matrices $ \vbeli $ is defined to be $ \vbeli := \bigl( ( \vbeli_{i,j} )_{\{ i, j \} \in \set{K}}, ( \vbeli_{i} )_{i\in \setvar} \bigr) $. In particular, the matrix $ \vbeli_{i,j} $ is defined to be a $ | \setxe | $-by-$ | \setxe | $ matrix with entry $ \beli_{i,j}( x_{i}, x_{j} ) \in \sR_{\geq 0} $ and the matrix $ \vbeli_{i} $ is defined to be a $ | \setxe | $-by-$ | \setxe | $ diagonal matrix with $ \beli_{i}( x_{i} ) \in \sR_{\geq 0} $ being the $ x_{i} $-th diagonal term.


    \item The set of realizable marginals of $ \mathsf{N} $ is defined to be
    \begin{align*}
      \!\!\!\!\!\!\!\!
      \MP(\mathsf{N}) :=
      \left\{ 
        \vbeli 
        \left| 
        \begin{array}{l}
          \text{there exists an $ \setF(\mathsf{N}) $ 
          such that}\\
          \vbeli_{i,j} = \vpf, \ \vbeli_{i} = \vpe, \
          \{ i,j \} \in \set{K}
        \end{array}
        \right.
      \right\}. 
    \end{align*}

    \item The set $ \setL $ is defined to be 
    \begin{align*}
       \!\!\!\!\!\setL:= 
       \left\{ 
          \vbeli
          \left|
          \begin{array}{l}
          0 \leq \beli_{i,j}( x_{i}, x_{j} ) \leq 1,\
          \forall x_{i}, x_{j}, i,j \\ 
          \sum_{ x_{j} }
          \beli_{i,j}( x_{i}, x_{j} ) = \beli_{i}( x_{i} ),\
          \forall x_{i}, i
          \\
          \sum_{ x_{i} }
          \beli_{i,j}( x_{i}, x_{j} ) = \beli_{j}( x_{j} ),\ 
          \forall x_{j}, j\\ 
          \sum_{ x_{i} } \beli_{i}( x_{i} ) = 1,\ \forall i
          \end{array}
          \right.
      \right\}.
    \end{align*}
    The set $ \setL $ is essentially the local marginal polytope
                of the S-NFG $ \Nfcyc $ in Fig.~\ref{Sec:3:fig:1}. The
                definition of the local marginal polytope for an S-NFG is given
                in~\cite[Section 4.1.1]{Wainwright2008}.

    \item For each $ \vbeli \in \setL $ and $ \{ i,j \} \in \set{K} $, each marginal $ \vbeli_{i,j} $ can be used to represent the PMF for two random variables $ Y_{1}, Y_{2} \in \setxe $ by setting the probability $ \pr( Y_{1} = x_{i}, Y_{2} = x_{j} )
        = \beli_{i,j}(x_{i}, x_{j}) $, $ x_{i}, x_{j} \in \setxe $.
    The functions $ \cov( Y_{1}, Y_{2} ) $, $ \vari( Y_{1} ) $, $ \vari( Y_{2} ) $ are defined to be the covariance of $ Y_{1} $ and $ Y_{2} $, and the variances of $ Y_{1} $ and $ Y_{2} $, respectively.  When $\vari( Y_{1} ), \vari( Y_{2} ) >  0$, the PCC of $ Y_{1} $ and $ Y_{2} $ is defined to be
    $ \corr( \vbeli_{i,j} ) := \cov( Y_{1}, Y_{2} ) 
    / \sqrt{ \vari( Y_{1} ) \cdot \vari( Y_{2} ) }. $
  \end{enumerate}
\end{definition}

When there is no ambiguity, we use short-hands $ ( \cdot )_{i,j} $, $ ( \cdot )_{i} $, 
$ \sum_{x_{i}} $, and $\{ \cdot \}_{x_{i}} $  for $ ( \cdot )_{\{ i,j \} \in \set{K} } $, $ ( \cdot )_{i\in \setvar} $, $ \sum_{x_{i}\in \setxe} $, and 
$\{ \cdot \}_{x_{i}\in \setxe} $, respectively.

Because $ \setL $ is a convex set by definition, Carathéodory's theorem~\cite[Proposition B.6]{bertsekas2016nonlinear} states that each element in $ \setL $ can be written as a convex combination of the vertices in $ \setL $. The full list of the vertices in $ \setL $ is given in~\cite[Appendix~A]{Huang2021}.

\begin{proposition}
  For $ \{ i,j \} \in \set{K} $ and $ 0 < \beli_{i}(0), \beli_{j}(0) < 1 $, the PCC $ \corr( \vbeli_{i,j} ) $ satisfies
    \begin{align*}
        \corr( \vbeli_{i,j} ) &= \frac{ \det( \vbeli_{i,j} ) 
        }{ \sqrt{ \det( \vbeli_{i} ) \cdot \det( \vbeli_{j} ) 
        } }.
    \end{align*}
    The requirement $ 0 < \beli_{i}(0), \beli_{j}(0) < 1 $ ensures that $ \det( \vbeli_{i} ), \det( \vbeli_{j} ) > 0 $, and thus $ \corr( \vbeli_{i,j} ) $ is well-defined.
\end{proposition}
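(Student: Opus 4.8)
The plan is to compute $\cov(Y_{1}, Y_{2})$, $\vari(Y_{1})$, and $\vari(Y_{2})$ directly from the $2\times 2$ matrix $\vbeli_{i,j}$ and to recognize each of them as one of the three determinants appearing in the claim. To set up the bookkeeping, write $a := \beli_{i,j}(0,0)$, $b := \beli_{i,j}(0,1)$, $c := \beli_{i,j}(1,0)$, and $d := \beli_{i,j}(1,1)$; then $\det(\vbeli_{i,j}) = ad - bc$, while the definition of the marginals gives $\beli_{i}(0) = a+b$, $\beli_{i}(1) = c+d$, $\beli_{j}(0) = a+c$, $\beli_{j}(1) = b+d$, and, because $\vbeli_{i,j}$ represents a PMF, $a+b+c+d = 1$.

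Next I would exploit that $Y_{1}, Y_{2} \in \setxe = \{0,1\}$. This yields $\expect[Y_{1}] = \beli_{i}(1)$, $\expect[Y_{2}] = \beli_{j}(1)$, $\expect[Y_{1}^{2}] = \expect[Y_{1}]$, $\expect[Y_{2}^{2}] = \expect[Y_{2}]$, and $\expect[Y_{1} Y_{2}] = \beli_{i,j}(1,1) = d$. Hence $\vari(Y_{1}) = \expect[Y_{1}] - \expect[Y_{1}]^{2} = \beli_{i}(1)\beli_{i}(0)$, which equals $\det(\vbeli_{i})$ since $\vbeli_{i} = \diag(\beli_{i}(0), \beli_{i}(1))$; the same argument gives $\vari(Y_{2}) = \det(\vbeli_{j})$. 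For the covariance, $\cov(Y_{1}, Y_{2}) = \expect[Y_{1} Y_{2}] - \expect[Y_{1}]\expect[Y_{2}] = d - (c+d)(b+d)$, and substituting $a = 1 - b - c - d$ collapses this to $ad - bc = \det(\vbeli_{i,j})$.

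Finally I would substitute these three identities into $\corr(\vbeli_{i,j}) = \cov(Y_{1},Y_{2}) / \sqrt{\vari(Y_{1})\vari(Y_{2})}$ to obtain the stated formula. For well-definedness, $0 < \beli_{i}(0) < 1$ forces $\det(\vbeli_{i}) = \beli_{i}(0)\bigl(1 - \beli_{i}(0)\bigr) > 0$, and likewise $\det(\vbeli_{j}) > 0$, so the denominator is a square root of a strictly positive quantity.

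There is essentially no deep obstacle: the whole argument is a short direct computation. The only step that needs a little care is the algebraic simplification of $\cov(Y_{1},Y_{2})$, where the normalization $a+b+c+d = 1$ must be invoked to turn $d - (c+d)(b+d)$ into $ad - bc$; the variance identities and everything else follow immediately from the $\{0,1\}$-valuedness of $Y_{1}$ and $Y_{2}$.
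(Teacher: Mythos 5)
Your proof is correct and complete: the identifications $\vari(Y_{1}) = \det(\vbeli_{i})$, $\vari(Y_{2}) = \det(\vbeli_{j})$, and $\cov(Y_{1},Y_{2}) = \det(\vbeli_{i,j})$ (using $a+b+c+d=1$) are exactly what is needed, and the positivity argument for the denominator is right. The paper itself only cites the direct computation in its longer version, and your argument is essentially that same straightforward calculation, so there is nothing further to add.
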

\begin{proof}
  See the proof of~\cite[Corollary~9]{Huang2021}.
\end{proof}

\begin{definition}\label{Sec0:def:1}
  Suppose that $\vbeli \in \setL $ and $ 0 < \beli_{i}(0) < 1,\ i \in \setvar $, we define
  \begin{align*}
    &\rCHSHbet :=   \nonumber\\
    &\quad\ \corr( \vbeli_{1,2} ) + \corr( \vbeli_{1,4} )
    + \corr( \vbeli_{3,2} )- \corr( \vbeli_{3,4} ).
  \end{align*}

\end{definition}

\subsection{Properties for \texorpdfstring{$\SNFG$}{}\label{Sec:7}}

In this subsection, we prove inequalities with respect to $ \rCHSHbet $ for $ \vbeli \in  \MP(\SNFG) $. These inequalities genuinely are (nonlinear) Bell inequalities~\cite{PhysRevLett.48.291} in the usual sense. By definition, it holds that
\begin{align*}
  \MP(\Nfcyc) \subseteq \MP(\SNFG), \qquad
  \MP(\NMkov) \subseteq \MP(\SNFG),
\end{align*}
so any inequality that holds for all $ \vbeli \in \MP(\SNFG) $ also holds for all $ \vbeli \in \MP(\Nfcyc) \cup \MP(\NMkov) $.

\begin{theorem}\label{Sec0:prop:3}
  For any $ \vbeli \in \MP(\SNFG) $ such that $ 0 < \beli_{i}(0) < 1 $ for all
  $ i \in \setvar $, we have 
  \begin{align*}
    |\rCHSHbet| 
      &< 2\sqrt{2}.
  \end{align*}
\end{theorem}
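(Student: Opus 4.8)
The plan is to reduce the problem to a finite-dimensional optimization over a well-understood convex set, and then exploit the structure of the Pearson correlation coefficient (PCC) as a ratio of determinants. First I would observe that, by the containment $\MP(\SNFG) \subseteq \setL$, it suffices to bound $|\rCHSHbet|$ over all $\vbeli \in \setL$ with $0 < \beli_i(0) < 1$ for $i \in \setvar$; however, this will in general only give the (tight-for-$\setL$) bound $2$ rather than $2\sqrt{2}$, so I expect the real argument must use more than just membership in $\setL$. Wait --- re-reading the statement, the bound claimed is $2\sqrt{2}$, the Tsirelson bound, so the set $\MP(\SNFG)$ must be genuinely larger than the ``classical'' polytope region where PCCs behave like signed correlations. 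The right approach, then, is: use the Proposition that $\corr(\vbeli_{i,j}) = \det(\vbeli_{i,j})/\sqrt{\det(\vbeli_i)\det(\vbeli_j)}$ to rewrite $\rCHSHbet$ purely in terms of determinants of $2\times 2$ matrices, and then parametrize. Writing $a := \beli_i(0) \in (0,1)$ for each $i$, one has $\det(\vbeli_i) = a(1-a)$, and $\det(\vbeli_{i,j}) = \beli_{i,j}(0,0)\beli_{i,j}(1,1) - \beli_{i,j}(0,1)\beli_{i,j}(1,0)$, which is constrained by the marginal (row/column sum) conditions defining $\setL$.

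**Key reduction.**

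The key step is to change variables. For the pair $\{i,j\}$, once the marginals $\beli_i, \beli_j$ are fixed, the matrix $\vbeli_{i,j}$ has a single degree of freedom; parametrize it so that $\corr(\vbeli_{i,j})$ becomes a correlation-like quantity $c_{i,j} \in [-1,1]$. A cleaner route: each $2\times 2$ nonnegative matrix $\vbeli_{i,j}$ with row sums $\beli_i$ and column sums $\beli_j$ corresponds to a joint distribution of two $\{0,1\}$-valued random variables, and its PCC equals the ordinary correlation coefficient of those variables. So I would introduce, for each edge variable $x_i$, a ``standardized'' $\pm 1$-valued random variable and recognize that $\rCHSHbet = \expect[Z_1 Z_2] + \expect[Z_1 Z_4] + \expect[Z_3 Z_2] - \expect[Z_3 Z_4]$ where each $Z_i$ is a unit-variance function of $x_i$ --- but crucially the four pairwise joint laws need not come from a single global law; $\MP(\SNFG)$ only imposes that they arise as marginals of the star factor graph $\SNFG$, i.e.\ of \emph{some} global function $f$ on $(x_1,x_2,x_3,x_4)$. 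That is exactly enough to realize all four pairs from one joint distribution. Hence the four correlations $c_{i,j}$ are the pairwise correlations of four genuine jointly-distributed $\{0,1\}$ random variables (standardized), and the classical CHSH inequality for correlations gives $|\rCHSHbet| \le 2$.

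**Where the $2\sqrt 2$ and strictness come from.**

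Hmm --- but then the bound would be $2$, not $2\sqrt 2$. I think the subtlety is that standardizing a $\{0,1\}$ variable with $\beli_i(0) \ne 1/2$ inflates things: the PCC of $x_i,x_j$ is $\cov/\sqrt{\vari_i \vari_j}$, and because the two pairs sharing index $i$ must use the \emph{same} marginal $\beli_i$ but the PCCs are normalized per pair, the usual cancellation in the CHSH proof (which needs $Z_i^2 = 1$ pointwise, not just in expectation) fails. So the correct plan is: treat $\rCHSHbet$ as a function $F$ of the free parameters (the off-diagonal masses of the four $\vbeli_{i,j}$ and the four marginals $a_i$), note $F$ is continuous on the compact set obtained by taking closure, so the supremum is attained; then argue that at any maximizer the gradient conditions force the configuration into a form where one can bound each PCC and apply a Cauchy--Schwarz / trigonometric argument in the style of Tsirelson's original proof --- parametrize $\corr(\vbeli_{i,j}) = \cos\theta_{ij}$ suitably and bound $\cos\theta_{12} + \cos\theta_{14} + \cos\theta_{32} - \cos\theta_{34}$ subject to the consistency constraints linking the angles, getting $\le 2\sqrt 2$. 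The strict inequality should then come from the observation that equality in the Tsirelson-type bound requires all four marginals to be uniform ($a_i = 1/2$) \emph{and} a rank/degeneracy condition on the $\vbeli_{i,j}$ that is incompatible with $\vbeli$ being realizable from a single nonnegative $f$ on the star $\SNFG$ --- i.e., the boundary case is excluded by the same single-joint-distribution property that, in the classical case, would even give $2$. I expect \textbf{the main obstacle} to be precisely this: pinning down which extra constraint (beyond $\setL$-membership) the realizability-from-$\SNFG$ condition imposes, and showing it is simultaneously (a) weak enough to allow values above $2$ --- so the theorem is not vacuous --- and (b) strong enough to force the strict bound $2\sqrt 2$; handling the nonlinearity of the PCC (the square roots in the denominators) when taking gradients or applying convexity is the technical crux, since $\MP(\SNFG)$ is convex in $\vbeli$ but $\rCHSHbet$ is not a linear functional of $\vbeli$.
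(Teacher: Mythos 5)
Your two pivotal intermediate claims are false, and the paper itself refutes them. First, optimizing over $\setL$ does not give the bound $2$: elements of $\setL$ need not be consistent with any joint PMF, so one can take all single-variable marginals uniform and make each $\corr(\vbeli_{i,j})$ equal to $+1$ or $-1$ at will, giving $\rCHSHbet = 4$; the supremum over $\setL$ is $4$, not $2$. Second, and more seriously, your ``key reduction'' asserts that because every $\vbeli \in \MP(\SNFG)$ is realized by one joint PMF of binary variables, the classical CHSH inequality yields $|\rCHSHbet| \le 2$. This is wrong: the CHSH bound $2$ needs observables bounded by $1$ pointwise, whereas the standardized version of a $\{0,1\}$-valued variable with $\beli_{i}(0) = a \neq 1/2$ takes the values $\sqrt{a/(1-a)}$ and $-\sqrt{(1-a)/a}$, one of which exceeds $1$ in absolute value. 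Indeed, the proof sketch of Theorem~\ref{Sec0:thm:2} exhibits $\vbeli^{*} \in \MP(\SNFG)$ with value $5/2 > 2$, so any argument that outputs $2$ proves too much. You notice this and retract, but what replaces it is only a plan, not a proof.

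The replacement plan establishes neither the bound $2\sqrt{2}$ nor its strictness. The ``consistency constraints linking the angles'' are exactly the crux and are never identified: what one would need is that the PCCs of four jointly distributed random variables form a positive-semidefinite correlation matrix (a Gram matrix of unit vectors), from which a Tsirelson-type bound $\le 2\sqrt{2}$ follows; nothing in your text derives this or any substitute, and the compactness step is also shaky because the PCCs do not extend continuously to closure points where $\beli_{i}(0) \in \{0,1\}$ (vanishing variances). Strictness, which is the actual content of the theorem, is where the paper's argument lives: it proceeds by contradiction, showing that attaining $2\sqrt{2}$ would force the realizing PMF to involve variables of alphabet size larger than two, which is impossible since $\MP(\SNFG)$ is generated by binary variables. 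Your proposed strictness route (equality forces uniform marginals plus a degeneracy incompatible with realizability on $\SNFG$) is unsubstantiated and misdirected: with uniform marginals the PCCs are ordinary correlations of $\pm 1$ variables and the value is at most $2$, so the delicate case is precisely that of biased marginals, which you do not analyze.
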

\begin{proof}
  We prove it by contradiction. On the one hand, the set $ \MP( \SNFG ) $ consists of marginals for binary random variables only. On the other hand, to have $ \rCHSHbet = 2 \sqrt{2} $ for some $ \vbeli \in \MP(\SNFG) $, the PMF realizing $ \vbeli $ needs to be the joint PMF for random variables with alphabet size greater than two. For details, see the proof in~\cite[Appendix~C]{Huang2021}.
\end{proof}

The main idea in the proof of Theorem~\ref{Sec0:prop:3} can be used to verify whether a proposed bound for a function with binary random variables is achievable. It is different from the idea in the proof of the upcoming Theorem~\ref{Sec0:thm:2}. 

\begin{theorem}\label{Sec0:thm:2}
  For any $ \vbeli \in \MP(\SNFG) $ such that $ 0 < \beli_{i}(0) < 1 $ for all $ i \in  \setvar $, we have $ |\rCHSHbet| \leq 5/2. $
\end{theorem}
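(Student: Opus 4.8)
The plan is to treat the four variables on the half edges of $\SNFG$ as genuine binary random variables $X_1,X_2,X_3,X_4$ with joint PMF $p_{\SNFG}$ on $\setxe^4$, write $r_{i,j} := \corr(\vbeli_{i,j})$, and bound $\rCHSHbet = r_{1,2} + r_{1,4} + r_{3,2} - r_{3,4}$ directly. First I would reduce by symmetry. Since a Pearson correlation coefficient is invariant under separate affine rescalings of the two variables, I may replace each $X_i$ by its standardized version $\tilde{X}_i$, a two-valued random variable with $\expect[\tilde{X}_i]=0$ and $\vari(\tilde{X}_i)=1$; a one-line computation shows $\tilde{X}_i$ takes the values $b_i=\sqrt{(1-p_i)/p_i}$ and $a_i=-\sqrt{p_i/(1-p_i)}$ with probabilities $p_i$ and $1-p_i$, where $p_i:=\pr(X_i=1)$, so in particular $a_i b_i = -1$. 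Then $\rCHSHbet = \expect\bigl[(\tilde{X}_1+\tilde{X}_3)\tilde{X}_2 + (\tilde{X}_1-\tilde{X}_3)\tilde{X}_4\bigr]$. Moreover, relabelling the two outcomes of a single $X_i$ negates every $r_{i,j}$ involving it, and the CHSH term pattern has the usual symmetry group permuting/reflecting its four terms; I would use these symmetries to normalize parameter ranges and to collapse the later case analysis, and to get the two-sided bound from the one-sided one.

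The crux is the nonlinearity of $\rCHSHbet$ in $\vbeli$, and the observation that removes it is: \emph{once the four singleton marginals are fixed, $\rCHSHbet$ is an affine function of the joint PMF.} Indeed, by the Proposition, $r_{i,j}=\det(\vbeli_{i,j})/\sqrt{\det(\vbeli_{i})\det(\vbeli_{j})}$, and with the $\vbeli_{i}$ fixed the denominators are constants while $\det(\vbeli_{i,j})=\cov(X_i,X_j)$ is affine in the joint PMF. Equivalently, I would condition on $(X_1,X_3)$: with $w_{s,t}:=\pr(\tilde{X}_1=s,\tilde{X}_3=t)$ for $(s,t)\in\{a_1,b_1\}\times\{a_3,b_3\}$ and $g_{s,t}:=\expect[\tilde{X}_2\mid\tilde{X}_1=s,\tilde{X}_3=t]$, $h_{s,t}:=\expect[\tilde{X}_4\mid\tilde{X}_1=s,\tilde{X}_3=t]$, one has $\rCHSHbet=\sum_{s,t}w_{s,t}\bigl[(s+t)g_{s,t}+(s-t)h_{s,t}\bigr]$. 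The only constraints are $g_{s,t}\in[a_2,b_2]$, $h_{s,t}\in[a_4,b_4]$, the zero-mean equations $\sum_{s,t}w_{s,t}g_{s,t}=0=\sum_{s,t}w_{s,t}h_{s,t}$, and the normalization/marginal conditions on $w$. Crucially, because $\tilde{X}_2$ is two-valued its conditional second moment $\expect[\tilde{X}_2^2\mid\tilde{X}_1,\tilde{X}_3]=g_{s,t}(a_2+b_2)-a_2b_2$ is \emph{affine} in $g_{s,t}$, so the unit-variance constraint on $\tilde{X}_2$ is automatically implied (and likewise for $\tilde{X}_4$), and no constraint couples $g$ to $h$; conversely any feasible $(w,g,h)$ is realized by a product distribution within each cell. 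Substituting $\alpha_{s,t}:=w_{s,t}g_{s,t}$ and $\beta_{s,t}:=w_{s,t}h_{s,t}$ turns the whole problem, for each fixed $(p_1,p_2,p_3,p_4)$, into a linear program in $(w,\alpha,\beta)$, whose optimum $V(p_1,p_2,p_3,p_4)$ is therefore attained at a vertex and is given, on each cell of a polyhedral subdivision of the parameter cube, by a closed form in the quantities $\sqrt{p_i/(1-p_i)}$ and $\sqrt{(1-p_i)/p_i}$.

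It then remains to prove $\sup V\le 5/2$ over $(p_1,p_2,p_3,p_4)\in(0,1)^4$. I would enumerate the LP's optimal bases — few after using the symmetries and the structural fact that at an optimum each $g_{s,t}$, $h_{s,t}$ sits at an endpoint of its interval except for at most one ``pivot'' cell per variable forced by the zero-mean equations — obtaining for each basis an explicit function of $(p_1,p_2,p_3,p_4)$ (and of the single free parameter in $w$, namely the $X_1$--$X_3$ correlation), and then maximize each such function over the relevant subregion of the cube by elementary calculus / KKT, recording a configuration at which the value $5/2$ is attained. Taking the maximum over bases and over the relabelling symmetries yields $\rCHSHbet\le 5/2$, and flipping one variable gives $|\rCHSHbet|\le 5/2$.

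The main obstacle is this last optimization. Because $\rCHSHbet$ is a genuine ratio, one cannot reduce to checking finitely many extreme marginals: the linearization step kills the ``joint-PMF'' directions, but a continuum of marginal parameters survives, the resulting maximization is nonconvex, and one must therefore control all critical points of several candidate closed forms as well as the degenerate boundary $p_i\to 0$ or $p_i\to 1$, where a variance vanishes and a Pearson coefficient blows up, checking that the $5/2$ bound persists in the limit. Keeping the case explosion in check — exploiting the CHSH symmetry group and the sparsity of LP vertices to reduce to a handful of genuinely different cases — is the delicate part, and is exactly the ``difficulty posed by the nonlinearity'' flagged in the abstract. (For contrast, relaxing ``two-valued'' to ``arbitrary square-integrable'' replaces the LP bound by a Cauchy--Schwarz / PSD bound of $2\sqrt{2}>5/2$, which is Theorem~\ref{Sec0:prop:3}; the whole point here is to exploit the discreteness of the $X_i$ to improve $2\sqrt{2}$ to $5/2$.)
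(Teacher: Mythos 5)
Your reduction is sound as far as it goes, and it takes a genuinely different route from the paper's. You keep the exact joint-PMF description of $\MP(\SNFG)$ and linearize the \emph{objective} by fixing the four singleton marginals: the observation that with $\vbeli_{i}$ fixed the covariances are affine in the joint PMF, and that for two-valued standardized variables the unit-variance constraint is implied by the zero-mean constraint (so the conditional means $g_{s,t},h_{s,t}$ are only box- and mean-constrained), is correct and gives a legitimate parametric LP. The paper instead linearizes the \emph{feasible set}: it replaces $\MP(\SNFG)$ by the superset $\setLCHSH$ (local marginal polytope cut down by the ordinary linear CHSH inequality), exhibits a $\vbeli^{*}\in\MP(\SNFG)$ with $\mathrm{CorrCHSH}(\vbeli^{*})=5/2$, and maximizes $\rCHSHbet$ over $\setLCHSH$ using the structure of $\vbeli^{*}$ to organize the cases. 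Your route avoids any relaxation (so no extra inequality must be imported to keep the superset tight enough), at the price of carrying the full joint-realizability constraints; the paper's route decouples the pairwise beliefs and works with purely linear constraints, at the price of having to argue the relaxation does not push the optimum above $5/2$.

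The genuine gap is that the step where the constant $5/2$ must actually emerge is only announced, never executed. Enumerating the optimal bases of the LP, extracting the closed-form cell functions of $(p_1,\dots,p_4)$ and the free $X_1$--$X_3$ correlation, and maximizing them over $(0,1)^4$ is precisely the content of the theorem (it is the open conjecture of Pozsgay et al.), and nothing in your outline certifies that this maximum is $5/2$ rather than some other value in $(2,2\sqrt{2})$; you also never exhibit a configuration attaining $5/2$, which at minimum would be needed as a sanity anchor for the case analysis you propose. As written, the argument is a correct reduction of the problem to a finite but delicate nonconvex optimization --- a restatement of the difficulty flagged in the abstract rather than its resolution --- whereas the paper (with its long-version appendix) actually carries the analogous analysis through on $\setLCHSH$. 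A minor further slip: as $p_i\to 0$ or $1$ no Pearson coefficient ``blows up'' (it always lies in $[-1,1]$); the boundary issue is only that the coefficient becomes undefined and your cell-wise closed forms may be discontinuous there, so the supremum over the open cube still needs a limiting argument, but not for the reason you give.
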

\begin{proof}
  We give a proof sketch here. 
  For details, see the proof in~\cite[Appendix~E]{Huang2021}.
  \begin{itemize} 

    \item Consider a subset of $ \setL $ such that in this subset, $ 0 < \beli_{i}(0) < 1 $ for $ i \in \setvar $, and the elements in $ \vbeli $ satisfy the original linear CHSH inequality. Denote this set as $ \setLCHSH $. We have $ \MP(\SNFG) \subsetneq \setLCHSH $.

    \item Find a $\vbeli^{*} \in \MP(\SNFG)$ such that $ \mathrm{CorrCHSH}(\vbeli^{*}) = 5/2. $
   
    \item We formulate an optimization problem where $ \rCHSHbet $ is maximized over $ \vbeli \in \setLCHSH $ such that $ \vbeli $ has a similar structure as $ \vbeli^{*} $, e.g., having the same number of zero entries in $ (\vbeli_{i,j})_{i,j} $. Note that this optimization problem has linear constraints only, which helps determine the optimal solution. We prove $ \rCHSHbet \leq 5/2 $ in this case.

    \item We generalize the proof for all $ \vbeli \in \setLCHSH $.

    \item The proof of $ \rCHSHbet \geq -5/2 $ is similar.

   \end{itemize}
\end{proof}
Theorem~\ref{Sec0:thm:2} proves the conjecture stated in~\cite{Victor2017}. The key idea of the proof is that we consider $ \setLCHSH $ instead of $ \MP(\SNFG) $. Suppose that we want to prove $ \rCHSHbet \leq 5/2 $ for $ \vbeli \in \MP(\SNFG) $ directly. Because $ \MP(\SNFG) $ is a convex set, for any $ \vbeli \in \MP(\SNFG) $, the marginal $ \vbeli_{i,j} $ can be written as a convex combination of some joint PMF for $ X_{1},\ldots,X_{4} $, i.e., $ \{ p_{\SNFG}( \vx ) \}_{\vx} $, which makes the expression of $ \rCHSHbet $ non-trivial. By considering a superset of $ \MP(\SNFG) $, i.e., $ \setLCHSH $, we can simplify $ \rCHSHbet $. We suspect that this idea can be generalized in the proof of other non-linear Bell inequalities.

\subsection{Markov Chain in Fig.~\ref{Sec0:fig:1}}\label{Sec:8}

In this subsection, we consider the Markov chain $ \NMkov $ in Fig.~\ref{Sec0:fig:1}.

\begin{theorem}\label{Sec0:thm:1}
 For the Markov chain $ \NMkov $ in Fig.~\ref{Sec0:fig:1}, we have
  \begin{align*}
    \corr( \vbeli_{3,4} ) = \corr( \vbeli_{3,2} ) 
    \cdot \corr( \vbeli_{1,2} ) 
    \cdot \corr( \vbeli_{1,4} ).
  \end{align*}
\end{theorem}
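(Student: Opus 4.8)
The plan is to exploit the Markov structure of $\NMkov$ in Fig.~\ref{Sec0:fig:1}, namely the chain $x_4 - x_1 - x_2 - x_3$, together with the determinant formula for the PCC established in the Proposition above. Concretely, for the Markov chain the joint PMF $p_{\NMkov}(\vx)$ factorizes as $p_{\NMkov}(x_4,x_1,x_2,x_3) = p(x_1) \cdot p(x_4 \mid x_1) \cdot p(x_2 \mid x_1) \cdot p(x_3 \mid x_2)$ (reading off the transition factors $M_{X_4\mid X_1}$, $M_{X_1,X_2}$, $M_{X_3\mid X_2}$). The key structural observation is that, conditioned on $X_1$, the pair $(X_4,X_2)$ is independent, and conditioned on $X_2$, the variable $X_3$ is independent of everything earlier. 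I would write each joint marginal $\vbeli_{i,j}$ as a product of a diagonal matrix of the conditioning variable's marginal and two stochastic transition matrices, so that $\det(\vbeli_{i,j})$ factors as a product of determinants.

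First I would set up notation: let $D_k := \vbeli_k$ be the $2\times 2$ diagonal matrix with the marginal of $X_k$ on the diagonal, and for an edge $\{i,j\}$ in the chain let $T_{j\mid i}$ be the $2\times 2$ row-stochastic transition matrix from $X_i$ to $X_j$. Then $\vbeli_{1,4} = D_1 T_{4\mid 1}$, $\vbeli_{1,2} = D_1 T_{2\mid 1}$, $\vbeli_{3,2} = D_2 T_{3\mid 2}$ (up to transposes fixing the row/column convention of Definition~\ref{Sec0:def:2}), and crucially $\vbeli_{3,4}$, the marginal on the two endpoints, equals $D_1 T_{4\mid 1}^{\tran} \cdots$ — more precisely, by the Chapman--Kolmogorov relation $T_{3\mid 1} = T_{2\mid 1} T_{3\mid 2}$ and $\vbeli_{3,4}(x_3,x_4) = \sum_{x_1} p(x_1) T_{4\mid 1}(x_1,x_4) T_{3\mid 1}(x_1,x_3)$. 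Taking determinants and using multiplicativity $\det(AB) = \det(A)\det(B)$ on each factored expression, together with the identity $\det(\vbeli_{i,j}) = \det(D_i)\det(T_{j\mid i})$ for the adjacent pairs, I would obtain $\det(\vbeli_{3,4})$ as a ratio of products of $\det(\vbeli_{i,j})$ and $\det(D_k)$; dividing by $\sqrt{\det(D_3)\det(D_4)}$ as in the Proposition and cancelling the intermediate $\det(D_1)$, $\det(D_2)$ factors should yield exactly $\corr(\vbeli_{3,2})\corr(\vbeli_{1,2})\corr(\vbeli_{1,4})$.

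The step I expect to be the main obstacle is handling the endpoint marginal $\vbeli_{3,4}$: unlike the three adjacent pairs, it is not directly a product of a diagonal matrix with a single transition matrix, so its determinant does not immediately split. The trick is that $\vbeli_{3,4}(x_3,x_4)$ can be written in the form $T_{4\mid 1}^{\tran} D_1 T_{3\mid 1} = T_{4\mid 1}^{\tran} D_1 T_{2\mid 1} T_{3\mid 2}$, i.e.\ as a product of four $2\times 2$ matrices, whereupon $\det(\vbeli_{3,4}) = \det(T_{4\mid 1})\det(D_1)\det(T_{2\mid 1})\det(T_{3\mid 2})$. Matching this against $\det(\vbeli_{1,4}) = \det(D_1)\det(T_{4\mid 1})$, $\det(\vbeli_{1,2}) = \det(D_1)\det(T_{2\mid 1})$, and $\det(\vbeli_{3,2}) = \det(D_2)\det(T_{3\mid 2})$ gives
\begin{align*}
  \det(\vbeli_{3,4})
    = \frac{\det(\vbeli_{1,4}) \cdot \det(\vbeli_{1,2}) \cdot \det(\vbeli_{3,2})}{\det(D_1) \cdot \det(D_2)} \ ,
\end{align*}
and dividing both sides by $\sqrt{\det(D_3)\det(D_4)}$ while inserting $\sqrt{\det(D_1)\det(D_2)}$ appropriately into numerator and denominator reproduces the claimed product of three PCCs. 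Throughout I would need to keep careful track of the row-versus-column stochastic conventions and of which marginal plays the role of $\vbeli_i$ in each factor, and to note that the hypotheses $0 < \beli_i(0) < 1$ guarantee all the $\det(D_k)$ in the denominators are strictly positive so that every PCC appearing is well-defined; I would also remark that the formula continues to hold trivially (both sides zero) when an intermediate transition matrix is singular, though the stated hypotheses already exclude degenerate marginals.
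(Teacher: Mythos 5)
Your argument is correct, but it takes a different route from the paper: the paper does not prove the identity itself, it simply invokes an existing result (Corollary~19 of the cited work by Mori), i.e.\ the known multiplicativity of correlation coefficients along a Markov chain, whereas you give a self-contained proof. Your route combines the determinant representation $\corr(\vbeli_{i,j}) = \det(\vbeli_{i,j})/\sqrt{\det(\vbeli_i)\det(\vbeli_j)}$ from the preceding Proposition with the transfer-matrix factorization of the chain: writing each adjacent-pair marginal as (a transpose of) $D_i T_{j\mid i}$ and the endpoint marginal as $T_{3\mid 1}^{\tran} D_1 T_{4\mid 1}$ with $T_{3\mid 1} = T_{2\mid 1} T_{3\mid 2}$ by Chapman--Kolmogorov, then using multiplicativity of $2\times 2$ determinants; I checked the bookkeeping and the cancellation of $\det(D_1)$, $\det(D_2)$ does yield exactly $\corr(\vbeli_{3,2})\corr(\vbeli_{1,2})\corr(\vbeli_{1,4})$. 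What your approach buys is transparency and self-containedness --- the theorem is exposed as nothing more than $\det(AB)=\det(A)\det(B)$ once the PCC is written as a normalized determinant --- at the price of being tied to the binary alphabet (the determinant formula for the PCC is specific to $2\times 2$ marginal matrices), which is all that is needed here since $\setxe = \{0,1\}$. Two small points you already flag and should keep: the condition $0 < \beli_i(0) < 1$ (implicit in the theorem for the PCCs to be defined) is also what legitimizes forming the conditional $T_{2\mid 1}$ from the joint factor $M_{X_1,X_2}$, and transposes of the transition matrices are harmless since they leave determinants unchanged.
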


\begin{proof}
  See~\cite[Corollary 19]{Mori2015}.
\end{proof}

\begin{corollary}\label{Sec1:coro:3}
  For the Markov chain $ \NMkov $ in Fig.~\ref{Sec0:fig:1}, it holds that $ | \corr( \vbeli_{3,4} ) | \leq | \corr( \vbeli_{1,2} ) | \leq 1. $
\end{corollary}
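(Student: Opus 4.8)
The plan is to derive the claim directly from Theorem~\ref{Sec0:thm:1} together with the elementary fact that a Pearson correlation coefficient always has absolute value at most one. First I would invoke Theorem~\ref{Sec0:thm:1}, which gives the multiplicative identity
\begin{align*}
  \corr( \vbeli_{3,4} )
    &= \corr( \vbeli_{3,2} ) \cdot \corr( \vbeli_{1,2} ) \cdot \corr( \vbeli_{1,4} )
\end{align*}
for the Markov chain $\NMkov$. Taking absolute values of both sides turns this into
\begin{align*}
  | \corr( \vbeli_{3,4} ) |
    &= | \corr( \vbeli_{3,2} ) | \cdot | \corr( \vbeli_{1,2} ) | \cdot | \corr( \vbeli_{1,4} ) | \ .
\end{align*}

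Next I would use the standard Cauchy--Schwarz bound for Pearson correlation coefficients: for any binary random variables realizing a marginal $\vbeli_{k,\ell}$ with positive variances, one has $| \corr( \vbeli_{k,\ell} ) | \leq 1$. Applying this to the factors $\corr( \vbeli_{3,2} )$ and $\corr( \vbeli_{1,4} )$, each of which is at most $1$ in absolute value, and discarding them from the product yields
\begin{align*}
  | \corr( \vbeli_{3,4} ) |
    &= | \corr( \vbeli_{3,2} ) | \cdot | \corr( \vbeli_{1,2} ) | \cdot | \corr( \vbeli_{1,4} ) |
     \leq | \corr( \vbeli_{1,2} ) | \ .
\end{align*}
Finally, applying the same Cauchy--Schwarz bound once more, now to $\vbeli_{1,2}$ itself, gives $| \corr( \vbeli_{1,2} ) | \leq 1$, which completes the chain of inequalities.

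I do not expect a genuine obstacle here; the only point requiring mild care is the implicit well-definedness assumption, namely that all the relevant marginals have strictly positive variances (equivalently $0 < \beli_{i}(0) < 1$ for the edge variables involved), so that every PCC appearing above is defined and Theorem~\ref{Sec0:thm:1} applies. Under that standing assumption the argument is a two-line consequence of the product formula and the boundedness of correlation coefficients.
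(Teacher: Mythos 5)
Your proposal is correct and follows essentially the same route as the paper: invoke Theorem~\ref{Sec0:thm:1}, take absolute values, and bound the factors $|\corr( \vbeli_{3,2} )|$ and $|\corr( \vbeli_{1,4} )|$ by $1$ via the standard PCC bound. The care you note about positive variances (so the PCCs are well defined) matches the paper's standing assumptions, so no further commentary is needed.
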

\begin{proof}
  It can be proven using Theorem~\ref{Sec0:thm:1} and $ |\corr( \vbeli_{i,j} )| \leq 1 $ for $ \{ i,j \} \in \set{K} $.
\end{proof}

We prove another variation of the PCC-based CHSH inequality for $ \NMkov $.
\begin{corollary}
  For the Markov chain $ \NMkov $ in Fig.~\ref{Sec0:fig:1}, we have
  \begin{align*}
    \bigl| & \corr( \vbeli_{1,2} ) + \corr( \vbeli_{2,4} ) 
    + \corr( \vbeli_{1,3} ) - \corr( \vbeli_{3,4} ) \bigr| 
    \leq 2.
  \end{align*}
\end{corollary}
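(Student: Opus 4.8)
The plan is to reduce the inequality to the Markov factorization of Theorem~\ref{Sec0:thm:1} (applied not only to the full chain but also to its three‑variable sub‑chains), and then to an elementary bound on a multilinear function of two scalars. From Fig.~\ref{Sec0:fig:1}, the global function of $\NMkov$ induces a joint PMF on $(X_{1},X_{2},X_{3},X_{4})$ under which $X_{4}-X_{1}-X_{2}-X_{3}$ is a Markov chain. I would set $a:=\corr(\vbeli_{1,4})$, $b:=\corr(\vbeli_{1,2})$, $c:=\corr(\vbeli_{2,3})$; under the standing assumption $0<\beli_{i}(0)<1$ for all $i$ (needed for the PCCs to be defined), these satisfy $|a|,|b|,|c|\le 1$.

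First I would record the factorizations implied by Theorem~\ref{Sec0:thm:1} (equivalently \cite[Corollary~19]{Mori2015}), using that deleting an endpoint of a Markov chain leaves the remaining variables a Markov chain with the same bivariate marginals: applied to $X_{4}-X_{1}-X_{2}$ this gives $\corr(\vbeli_{2,4})=ab$; applied to $X_{1}-X_{2}-X_{3}$ it gives $\corr(\vbeli_{1,3})=bc$; and applied to the full chain $X_{4}-X_{1}-X_{2}-X_{3}$ it gives $\corr(\vbeli_{3,4})=abc$ (here $\corr(\vbeli_{3,2})=c$ by symmetry of the covariance). Substituting these into the left‑hand side, the quantity to be bounded becomes
\begin{align*}
  \bigl| b + ab + bc - abc \bigr|
    = |b|\cdot\bigl| 1 + a + c - ac \bigr|.
\end{align*}

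Second, since $|b|\le 1$, it remains to show $|1+a+c-ac|\le 2$ for all $a,c\in[-1,1]$. This is the only estimate in the argument and it is immediate: writing $g(a,c):=1+a+c-ac=(1+a)+c(1-a)$ with $1-a\ge 0$, one gets $2a=(1+a)-(1-a)\le g(a,c)\le (1+a)+(1-a)=2$, and $2a\ge -2$, so $|g(a,c)|\le 2$. (Alternatively, $g$ is multilinear, so it attains its extrema over $[-1,1]^{2}$ at the four corners, where $g\in\{2,2,2,-2\}$.) Multiplying by $|b|\le 1$ yields the claimed bound of $2$.

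I do not expect a genuine obstacle; the only step requiring care is the first one, namely justifying that the ``off‑$\set{K}$'' correlations $\corr(\vbeli_{2,4})$ and $\corr(\vbeli_{1,3})$ factor. This needs the Markov factorization applied to the two three‑variable sub‑chains together with the observation that marginalizing out an endpoint variable preserves both the Markov property and the relevant bivariate marginals (and hence the PCCs). Everything after that is substitution plus the one‑line bound on $g$.
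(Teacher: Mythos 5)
Your argument is correct: applying the PCC factorization of Theorem~\ref{Sec0:thm:1} to the full chain and to the two endpoint-marginalized sub-chains (which, as you note, remain Markov with unchanged pairwise marginals) reduces the left-hand side to $\bigl|\corr(\vbeli_{1,2})\bigr|\cdot\bigl|1+a+c-ac\bigr|$ with $a=\corr(\vbeli_{1,4})$, $c=\corr(\vbeli_{3,2})$, and the multilinear bound $\bigl|1+a+c-ac\bigr|\le 2$ on $[-1,1]^{2}$ finishes the proof. The paper defers this corollary to its long version, but the reduction via Theorem~\ref{Sec0:thm:1} is exactly the route the paper sets up (it is how Corollary~\ref{Sec1:coro:3} is obtained), so your proof is essentially the same approach; just state explicitly the standing assumption $0<\beli_{i}(0)<1$ for all $i\in\setvar$ so that all four PCCs, including the ``off-$\set{K}$'' ones, are well-defined.
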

\begin{proof}
  See the proof of~\cite[Proposition~20]{Huang2021}.
\end{proof}


\section{Quantum-Probability Normal Factor Graphs (Q-NFGs)\label{Sec:3}}


This section considers a quantum system represented by the Q-NFG $ \NQFG $ in
Fig.~\ref{Sec1:fig:1}. Such Q-NFGs have been discussed thoroughly
in~\cite{Loeliger2017, Loeliger2020}. Note that in Fig.~\ref{Sec1:fig:1} and
Fig.~\ref{Sec1:fig:7}, the row index of a matrix is marked by a black dot. The
details of $ \NQFG $ are shown in \cite[Definition~21]{Huang2021}.

\begin{proposition}\label{Sec:3:prop:4}
 	For any $ \{ i, j \} \in \set{K} $, the variables $ \tx_{i} $ and $ \tx_{j} $ are jointly classicable, which implies that the marginals $ q_{i,j}( \tx_{i}, \tx_{j} ) $ and $ q_{i}( \tx_{i} ) $ are non-negative real numbers for any $ \tx_{i}, \tx_{j} \in \setxe^{2} $.
\end{proposition}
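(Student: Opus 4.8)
The plan is to verify Definition~\ref{Sec0:def:3} directly for the Q-NFG $\NQFG$ in Fig.~\ref{Sec1:fig:1}, i.e., to show that for each $\{i,j\} \in \set{K}$ the doubled marginal $q_{i,j}(\tx_i, \tx_j) := \sum_{\tx_{\set{K}^{\mathrm{c}}}, \tx'_{\set{K}^{\mathrm{c}}}} q(\tvx, \tvx')$ vanishes whenever $(\tx_i, \tx_j) \neq (\tx_i', \tx_j')$, and then invoke the remark following Definition~\ref{Sec0:def:3} (that $p(\vxI) := q(\vxI, \vxI)$ is a genuine PMF, hence nonnegative and real) to conclude. First I would write out $q(\tvx, \tvx')$ as the global function of $\NQFG$, which by the structure in Fig.~\ref{Sec1:fig:1} factors through the PSD matrix $\rho$ sandwiched between the unitaries $U_1, U_2$ on the primed/unprimed sides, with the two ``$I$'' boxes closing the loop. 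Tracing the edges, the marginal over a pair $\{i,j\}$ reduces — after summing out the complementary variables against the identity closing-boxes — to an expression of the form $\Tr\!\bigl( E_{\tx_i} E_{\tx_j} \, \rho \, E_{\tx_j'}^{\herm} E_{\tx_i'}^{\herm} \bigr)$ type object, or more precisely a partial trace of $\rho$ after conjugation by the appropriate unitary, evaluated in the computational basis.

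The key computational step is to identify, for each of the four pairs in $\set{K} = \{\{1,2\},\{1,4\},\{2,3\},\{3,4\}\}$, which tensor factor of the two-qubit state the variables $\tx_i, \tx_j$ ``read off.'' The crucial observation is that the two ``$=$''-free structure of $\NQFG$ together with the identity boxes forces the summation over the complementary variables on the primed and unprimed sides to be tied together, so that the reduced expression is literally $\bra{\tx_i \tx_j} \sigma \ket{\tx_i' \tx_j'}$ for some density operator $\sigma$ obtained from $\rho$ by a unitary change of basis and a partial trace. Then the jointly-classicable condition — $q_{i,j}(\tx_i,\tx_j) = 0$ for $(\tx_i,\tx_j) \neq (\tx_i',\tx_j')$ — would \emph{not} hold for a generic density operator $\sigma$. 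So the real content is that for \emph{each} $\{i,j\}\in\set{K}$ the pair of edges meeting the same function node are arguments of the \emph{same} local function on the primed side and its Hermitian conjugate, and the loop closure makes the off-diagonal-in-$(\tx_i\tx_j,\tx_i'\tx_j')$ terms collapse. Concretely I expect that summing out $\tx_{\set{K}^{\mathrm{c}}}$ against the $I$-boxes enforces $\tx_{\set{K}^{\mathrm{c}}} = \tx'_{\set{K}^{\mathrm{c}}}$, and then the local-function structure at the node joining $i$ and $j$, being the \emph{only} place those two variables appear, yields a diagonal contribution in $(\tx_i\tx_j)$ versus $(\tx_i'\tx_j')$ after the remaining trace — i.e., the pair $\{i,j\}$ is exactly a ``measurement in a fixed basis at one site,'' which is classicable.

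The main obstacle will be bookkeeping the index flow through Fig.~\ref{Sec1:fig:1}: one must carefully match each edge label $x_k$ / $x_k'$ to the correct row/column slot of $\rho$, $U_1$, $U_2$, $U_1^{\herm}$, $U_2^{\herm}$ (recalling the black-dot row-index convention), and check that the four pairs in $\set{K}$ are precisely those for which the two incident edges are ``separated'' from the entangling part by a full unitary on each side, so that the conjugation by $U_1 U_1^{\herm} = I$ (resp.\ $U_2 U_2^{\herm} = I$) trivializes the relevant block and leaves a matrix that is diagonal in the doubled index. Once the reduction to $p(\tx_i,\tx_j) = q_{i,j}(\tx_i,\tx_i,\tx_j,\tx_j)$ being a partial-trace-of-$\rho$ in a product basis is in hand, nonnegativity and realness are immediate from positive-semidefiniteness of $\rho$ (this is exactly the general remark after Definition~\ref{Sec0:def:3}), so no extra work is needed there. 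I would also note that the claim is asserted to hold for \emph{all} $\rho, U_1, U_2$, which is consistent with the reduction above: the classicability of these four pairs is a structural feature of the graph $\NQFG$, not of the particular choice of $\rho$ or the unitaries; the detailed verification is relegated to \cite[Definition~21 ff.]{Huang2021}, and I would cite that for the full edge-by-edge check.
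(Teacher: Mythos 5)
Your proposal is correct and takes essentially the same route as the paper: the paper's proof is exactly the direct verification of Definition~\ref{Sec0:def:3} that you outline (the $I$-boxes force $x_3 = x_3'$ and $x_4 = x_4'$ on the support of the global function, and summing the closed $U_k$--$I$--$U_k^{\herm}$ loop collapses to $U_k^{\herm} U_k = I$, forcing the remaining primed/unprimed indices of the pair $\{i,j\}$ to agree), after which nonnegativity and realness follow from $\rho$ being PSD via the remark after Definition~\ref{Sec0:def:3}. The only nit is notational: the sum defining $q_{i,j}$ should run over the complement of $\{i,j\}$ in $\{1,\dots,4\}$, not over the complement of $\set{K}$.
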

\begin{proof}
 	It can be proven directly by Definition~\ref{Sec0:def:3}.
\end{proof}

Then we define the set of realizable marginals of $ \NQFG $ based on the jointly classicable variables $ \tx_{i} $ and $ \tx_{j} $ 
for all $ \{ i, j \} \in \set{K} $.

\begin{definition}
 	With $ \tzero := (0,0) $, $ \tone := (1,1) $, and $ \{ i, j \} \in \set{K} $, the matrices $ \vq_{i,j} $ 
 	and $ \vq_{i} $ induced by $ q_{\NQFG} $ are defined to be
	\begin{align*}
	 	\vq_{i,j} &:= 
	 	\begin{pmatrix} 
	 		q_{i,j}( \tzero, \tzero ) 
	 		& q_{i,j}( \tzero, \tone ) \\
	 		q_{i,j}( \tone, \tzero )
	 		& q_{i,j}( \tone, \tone )
	 	\end{pmatrix}, \
	 	\vq_{i} := 
	 	\begin{pmatrix}
	 		q_{i}( \tzero ) & 0 \\
	 		0 & q_{i}( \tone )
	 	\end{pmatrix}.\nonumber
	\end{align*}
	The set of realizable marginals of $ \NQFG $ is defined to be the set 
	$ \MP(\NQFG) :=
 	\left\{ 
 		\vbeli 
		\left| 
		\vbeli_{i,j} = \vq_{i,j}, \ \vbeli_{i} = \vq_{i},\
		\{ i, j \} \in \set{K}
		\right. 	 	
	\right\}$,
  which is not the set of quantum correlations in the usual Bell nonlocality sense.
\end{definition}

\begin{proposition}\label{Sec:3:prop:6}
 	For any $ \vbeli \in \MP(\NQFG) $, there exist matrices $ \rho $, $ U_{1} $, and $ U_{2} $ such that
	\begin{align}
	 	\!\!\!\beli_{i,j}(x_{i}, x_{j}) &= 
    \Tr( ( A_{i,x_{i}} \otimes B_{j,x_{j}} )
    \cdot \rho \cdot ( A_{i,x_{i}} \otimes B_{j,x_{j}} )^{\herm} ),
	 	\label{Sec:3:eqn:15}
	\end{align}
	for all $ x_{i}, x_{j} \in \setxe$ and $\{ i,j \} \in \set{K}$, where 
	\begin{align*}
	  &A_{i,x_{i}} := E_{x_{i}} \cdot U_{1}^{[i = 3]},\
    B_{j,x_{j}} := E_{x_{j}} \cdot U_{2}^{[j = 4]},  \\
	 	&E_{x_{i}}( y_{i}, y_{i}' ) := 
	 	[ y_{i} = x_{i}] \cdot [y_{i} = y_{i}' ], \ 
    x_{i}, y_{i}, y_{i}' \in \setxe.
	\end{align*}
	Note that the set $ \{ E_{x_{i}} \}_{x_{i}} $ denotes the measurement of a single qubit in the computational basis. Then we have
	\begin{align*}
		\sum_{x_{i} \in \setxe} A_{i,x_{i}}^{\herm} 
    \cdot A_{i,x_{i}} &= 
    \sum_{x_{j} \in \setxe} B_{j,x_{j}}^{\herm} 
    \cdot B_{j,x_{j}} = I, \quad \{ i, j \} \in \set{K},
	\end{align*}
	which shows that both $ \{ A_{i,x_{i}} \}_{x_{i}} $ and $ \{ B_{j,x_{j}} \}_{x_{j}} $ are sets of measurement matrices with binary outcomes $ x_{i} $ and $ x_{j} $, respectively.
\end{proposition}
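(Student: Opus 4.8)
The plan is to unpack the definition of the Q-NFG $\NQFG$ in Fig.~\ref{Sec1:fig:1} and track how the classicable marginals $q_{i,j}$ arise from it, thereby producing the claimed trace formula. First I would recall from the figure that $\NQFG$ consists of a PSD matrix $\rho$ on two qubits (sandwiched between a leg and its conjugate leg), with the unitaries $U_1$ acting on the first-qubit leg and $U_2$ on the second-qubit leg, and with closing boxes ($I$-boxes) identifying $x_e$ with $x_e'$ at the open ports. Computing the global function $q_{\NQFG}(\tvx,\tvx')$ and then carrying out the sum-product marginalization over the edge variables in $\setE(\NQFG)\setminus\{i,j\}$ — which, because the closing boxes force $x_e=x_e'$ there — collapses the two ``halves'' (the $\rho$-side and the $\rho^{\herm}$-side) into a single trace. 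The effect is exactly to insert, on leg $i$, the operator $E_{x_i}$ possibly preceded by $U_1$ (namely when $i=3$, matching the placement of $U_1$ in the figure, hence the exponent $[i=3]$), and similarly $E_{x_j}$ possibly preceded by $U_2$ on leg $j$ with exponent $[j=4]$. Collecting these gives $A_{i,x_i}=E_{x_i}U_1^{[i=3]}$ and $B_{j,x_j}=E_{x_j}U_2^{[j=4]}$, and the marginal becomes $\Tr\bigl((A_{i,x_i}\otimes B_{j,x_j})\,\rho\,(A_{i,x_i}\otimes B_{j,x_j})^{\herm}\bigr)$, which is \eqref{Sec:3:eqn:15}. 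One must also check that the $\rho$ appearing here can be taken to be the same $\rho$ realizing $\vbeli$ in $\MP(\NQFG)$ and that it is PSD with unit trace, so that the right-hand side is a legitimate Born-rule expression.

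Next I would verify the resolution-of-identity (POVM) property. Since $\{E_{x_i}\}_{x_i}$ is the projective measurement in the computational basis, $\sum_{x_i} E_{x_i}^{\herm}E_{x_i}=\sum_{x_i}E_{x_i}=I$ (each $E_{x_i}$ is a diagonal idempotent projector, and they sum to the identity on $\setxe$). Then for the $A$-operators, $\sum_{x_i}A_{i,x_i}^{\herm}A_{i,x_i} = (U_1^{[i=3]})^{\herm}\bigl(\sum_{x_i}E_{x_i}^{\herm}E_{x_i}\bigr)U_1^{[i=3]} = (U_1^{[i=3]})^{\herm}U_1^{[i=3]} = I$, using unitarity of $U_1$ (and when the exponent is $0$ the factor is just $I$). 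The same computation with $U_2$ gives $\sum_{x_j}B_{j,x_j}^{\herm}B_{j,x_j}=I$. This establishes the last displayed equation and the interpretation of $\{A_{i,x_i}\}_{x_i}$ and $\{B_{j,x_j}\}_{x_j}$ as binary-outcome measurement (Kraus/POVM) families.

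I expect the main obstacle to be purely bookkeeping rather than conceptual: correctly reading off from Fig.~\ref{Sec1:fig:1} which leg of $\rho$ is acted on by which unitary, and being careful that edge $2$ is shared (it connects the two ``blocks'' in the figure) so that the case analysis over $\{i,j\}\in\set{K}=\{\{1,2\},\{1,4\},\{2,3\},\{3,4\}\}$ yields precisely the indicator exponents $[i=3]$ and $[j=4]$ — in particular that no unitary decorates legs $1$ and $2$, one copy of $U_1$ decorates leg $3$, and one copy of $U_2$ decorates leg $4$. A second, minor subtlety is justifying the collapse of the doubled graph into a single trace: this is the standard fact that in a Q-NFG with closing boxes, marginalizing out a classicable variable pair identifies the primed and unprimed copies and the partition-function normalization makes the result a bona fide non-negative real (which is exactly Proposition~\ref{Sec:3:prop:4}, already available). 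Since all of this is carried out in detail in~\cite[Definition~21]{Huang2021}, the write-up here can proceed by citing that construction and then performing the two short algebraic verifications above.
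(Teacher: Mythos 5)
Your proposal is correct and follows the same route the paper intends: the paper's proof is just the one-line remark that closing the dashed box in Fig.~\ref{Sec1:fig:7} (i.e., summing over the internal variables of the doubled graph) yields~\eqref{Sec:3:eqn:15}, which is exactly the marginalization-collapses-to-a-trace argument you spell out, followed by the routine POVM verification via unitarity of $U_1,U_2$. The only quibble is your parenthetical that ``edge $2$ is shared between the two blocks'' --- in Fig.~\ref{Sec1:fig:1} legs $1,3$ belong to the first qubit line and legs $2,4$ to the second, so each pair in $\set{K}$ takes one leg from each block --- but this does not affect the argument, since your final case analysis (no unitary on legs $1,2$, one $U_1$ on leg $3$, one $U_2$ on leg $4$) is the correct one.
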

\begin{proof}
  It can be proven directly.
\end{proof}

After closing the dashed box in Fig.~\ref{Sec1:fig:7}, i.e., summing over the variables inside the box, we obtain~\eqref{Sec:3:eqn:15}.

\begin{proposition}\label{Sec:3:prop:5}
  There exists a $ \vbeli \in \MP(\NQFG) $ such that $ |\corr( \vbeli_{3,4} )| > |\corr( \vbeli_{1,2} )| $.
\end{proposition}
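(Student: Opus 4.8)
The plan is to prove existence constructively: I would write down one explicit triple $(\rho, U_1, U_2)$ and simply read off the two marginals that matter. The starting point is Proposition~\ref{Sec:3:prop:6}. Rewriting $A_{3,x_3}\otimes B_{4,x_4} = (E_{x_3}\otimes E_{x_4})(U_1\otimes U_2)$ in~\eqref{Sec:3:eqn:15} and using that each $E_{x}$ is the rank-one projector onto a computational-basis vector, one sees that $\beli_{1,2}(x_1,x_2)$ is the $(x_1 x_2)$-th diagonal entry of $\rho$, whereas $\beli_{3,4}(x_3,x_4)$ is the $(x_3 x_4)$-th diagonal entry of the locally rotated operator $\widetilde{\rho} := (U_1\otimes U_2)\,\rho\,(U_1\otimes U_2)^{\herm}$. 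So it suffices to exhibit a two-qubit density operator whose computational-basis statistics are uncorrelated but whose statistics after a local basis change on the second qubit are strongly correlated.

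Concretely, I would take $\rho = |\psi\rangle\langle\psi|$ with $|\psi\rangle := (I\otimes H)|\Phi^{+}\rangle = \tfrac12(|00\rangle + |01\rangle + |10\rangle - |11\rangle)$, where $|\Phi^{+}\rangle := \tfrac{1}{\sqrt{2}}(|00\rangle + |11\rangle)$ and $H$ is the Hadamard matrix, and set $U_1 = I$, $U_2 = H$. Since $\rho$ is positive semidefinite with unit trace and $U_1, U_2$ are unitary, the induced $\vbeli$ belongs to $\MP(\NQFG)$, and joint classicability of all pairs in $\set{K}$ holds by Proposition~\ref{Sec:3:prop:4}.

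The two marginals are then immediate. All four diagonal entries of $\rho$ equal $1/4$, so $\vbeli_{1,2}$ has all entries equal to $1/4$ and $\vbeli_1 = \vbeli_2 = \tfrac12 I$; in particular $0 < \beli_1(0), \beli_2(0) < 1$, the PCC is well-defined, and by the determinant expression for the PCC of binary variables, $\corr(\vbeli_{1,2}) = \det(\vbeli_{1,2})/\sqrt{\det(\vbeli_1)\det(\vbeli_2)} = 0$. On the other hand, since $H^2 = I$ we get $\widetilde{\rho} = (I\otimes H)\rho(I\otimes H) = |\Phi^{+}\rangle\langle\Phi^{+}|$, whose diagonal is $(\tfrac12, 0, 0, \tfrac12)$; hence $\vbeli_{3,4} = \tfrac12 I$ and $\vbeli_3 = \vbeli_4 = \tfrac12 I$, so $0 < \beli_3(0), \beli_4(0) < 1$ and $\corr(\vbeli_{3,4}) = \det(\vbeli_{3,4})/\sqrt{\det(\vbeli_3)\det(\vbeli_4)} = 1$. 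Therefore $|\corr(\vbeli_{3,4})| = 1 > 0 = |\corr(\vbeli_{1,2})|$, which proves the claim.

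There is essentially no analytic difficulty here; the only real step is to spot the right example. The guiding intuition is that a maximally entangled pure state displays no computational-basis correlation of one party with a conjugate-basis measurement of the other, yet becomes perfectly correlated once that rotation is undone, which is exactly the reverse of the Markov-chain inequality $|\corr(\vbeli_{3,4})| \leq |\corr(\vbeli_{1,2})|$ of Corollary~\ref{Sec1:coro:3} (a consequence of Theorem~\ref{Sec0:thm:1}). In this way the proposition pinpoints the separation between $\MP(\NQFG)$ and $\MP(\NMkov)$ that is needed for the Venn diagram in Fig.~\ref{Sec1:fig:4}.
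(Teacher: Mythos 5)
Your proof is correct, and it takes the same route as the paper: the paper also proves the proposition by exhibiting an explicit choice of $(\rho, U_1, U_2)$ for $\NQFG$ (deferring the details to the long version), exactly as you do with $\rho = (I\otimes H)\,|\Phi^{+}\rangle\langle\Phi^{+}|\,(I\otimes H)$, $U_1 = I$, $U_2 = H$. Your computation of $\corr(\vbeli_{1,2}) = 0$ and $\corr(\vbeli_{3,4}) = 1$ via Proposition~\ref{Sec:3:prop:6} is sound, so nothing is missing.
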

\begin{proof}
  See~\cite[Proposition~26]{Huang2021}.
\end{proof}

Compared with Corollary~\ref{Sec1:coro:3}, Proposition~\ref{Sec:3:prop:5} implies that $ \MP(\NQFG) $ provides extra $ \vbeli $ that is not in $ \MP(\NMkov) $.

\begin{theorem}\label{Sec1:thm:1}
 	The Venn diagram in Fig.~\ref{Sec1:fig:4} holds.
\end{theorem}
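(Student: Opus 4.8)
The plan is to split ``the Venn diagram holds'' into the inclusions drawn in the figure, which are mostly definitional or follow from characterizations already proved, and the nonemptiness of each of the (roughly eight) cells of the diagram, which I would establish by producing one explicit belief vector $\vbeli$ per cell together with certificates of its membership or non-membership in each of the five sets. The certificates available are: membership in $\setL$ is just the local-consistency system; membership in $\MP(\SNFG)$ is compatibility with a joint PMF over $X_1,\dots,X_4$, i.e.\ membership in the marginal polytope of the variable-interaction $4$-cycle; Theorems~\ref{Sec0:prop:3} and~\ref{Sec0:thm:2} give $|\rCHSHbet|\le 5/2$ on $\MP(\SNFG)$ and so exclude higher-CHSH vectors; Theorem~\ref{Sec0:thm:1} and Corollary~\ref{Sec1:coro:3} exclude vectors from $\MP(\NMkov)$; and Proposition~\ref{Sec:3:prop:6}, which realizes $\MP(\NQFG)$ as a two-qubit POVM behavior, both produces quantum witnesses and, via Tsirelson's bound $|\rCHSHbet|\le 2\sqrt2$ (for uniform singletons the PCC is the ordinary $\pm1$ correlator, so this is the standard bound), excludes high-CHSH vectors from $\MP(\NQFG)$.

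First I would dispatch the inclusions. Adjoining the all-ones local function $f_{3,4}$ turns a path realization of $\NMkov$ into a cycle realization of $\Nfcyc$ with the same induced $\set{K}$-marginals, so $\MP(\NMkov)\subseteq\MP(\Nfcyc)$; merging the four nodes of $\Nfcyc$ into one is an instance of $\SNFG$, so $\MP(\Nfcyc)\subseteq\MP(\SNFG)$; the marginals of any joint PMF satisfy the $\setL$ constraints, so $\MP(\SNFG)\subseteq\setL$; and Propositions~\ref{Sec:3:prop:4}--\ref{Sec:3:prop:6} give $\beli_{i,j}(x_i,x_j)=\Tr\!\big((A_{i,x_i}\otimes B_{j,x_j})\,\rho\,(A_{i,x_i}\otimes B_{j,x_j})^{\herm}\big)$ with $\sum_{x_i}A_{i,x_i}^{\herm}A_{i,x_i}=\sum_{x_j}B_{j,x_j}^{\herm}B_{j,x_j}=I$, which forces non-negativity, normalization, and consistency of the induced singletons, hence $\MP(\NQFG)\subseteq\setL$.

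Then I would build the witnesses, cell by cell. The frustrated belief vector with uniform singletons and $\beli_{1,2},\beli_{1,4},\beli_{3,2}$ perfectly correlated but $\beli_{3,4}$ perfectly anti-correlated lies in $\setL$, admits no joint PMF, and has $\rCHSHbet=4>2\sqrt2$, so it is in $\setL$ but in neither $\MP(\SNFG)$ nor $\MP(\NQFG)$. The optimal two-qubit CHSH behavior (a suitably rotated maximally entangled $\rho$ with appropriate $U_1,U_2$) lies in $\MP(\NQFG)$ with $\rCHSHbet=2\sqrt2>5/2$, so by Theorem~\ref{Sec0:thm:2} it is in $\MP(\NQFG)\setminus\MP(\SNFG)$ (a Hardy-paradox state works as well). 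A relative-interior point of a cycle-inequality face of the $4$-cycle marginal polytope lies in $\MP(\SNFG)$ but not in $\MP(\Nfcyc)$, because that face is cut out by a sum over all four edges, so its vertex set is not an intersection of per-edge constraint sets, whereas the support of any cycle-factored distribution is exactly such an intersection. The pairwise marginals of the symmetric zero-field Ising model on that $4$-cycle with a generic coupling lie in $\MP(\Nfcyc)$ but violate $\corr(\vbeli_{3,4})=\corr(\vbeli_{3,2})\corr(\vbeli_{1,2})\corr(\vbeli_{1,4})$, so by Theorem~\ref{Sec0:thm:1} they are in $\MP(\Nfcyc)\setminus\MP(\NMkov)$. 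The behavior with $\rho=I/4$, $U_1=U_2=I$ gives all $\beli_{i,j}\equiv 1/4$ and uniform singletons, also realizable by a constant-factor Markov chain, so it sits in the innermost cell $\MP(\NQFG)\cap\MP(\NMkov)$. The witness of Proposition~\ref{Sec:3:prop:5}, having $|\corr(\vbeli_{3,4})|>|\corr(\vbeli_{1,2})|$ (hence out of $\MP(\NMkov)$ by Corollary~\ref{Sec1:coro:3}) but interior marginals (hence in $\MP(\Nfcyc)$), populates $\MP(\NQFG)\cap\MP(\Nfcyc)\setminus\MP(\NMkov)$, and a quantum behavior tuned to sit exactly on a cycle-inequality face populates $\MP(\NQFG)\cap\MP(\SNFG)\setminus\MP(\Nfcyc)$; finally a Markov-chain belief vector with nonuniform singletons supplies a point of $\MP(\NMkov)\setminus\MP(\NQFG)$. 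Collecting these shows each cell is nonempty and each drawn inclusion holds.

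The main obstacle is the witness construction: the examples must be chosen so that they jointly hit \emph{every} cell, and each negative membership must actually be certified. Negative membership in $\MP(\SNFG)$ rests entirely on the nonlinear CHSH bounds of Theorems~\ref{Sec0:prop:3} and~\ref{Sec0:thm:2} --- indeed $\MP(\NQFG)\not\subseteq\MP(\SNFG)$ is visible only because the quantum PCC-CHSH value can exceed the classical maximum $5/2$ --- while negative membership in $\MP(\NMkov)$ rests on Theorem~\ref{Sec0:thm:1} and Corollary~\ref{Sec1:coro:3}. The most delicate piece is the strict inclusion $\MP(\Nfcyc)\subsetneq\MP(\SNFG)$: one must identify a face of the $4$-cycle marginal polytope whose vertex set is not the solution set of per-edge constraints (a cycle-inequality face) and then argue rigorously that a generic point of that face, though realized by the single-node graph $\SNFG$, is the marginal of no cycle-factored distribution, since the support of any product of pair-potentials is an intersection of per-edge constraint sets and the smallest such set containing that vertex set is strictly larger than it.
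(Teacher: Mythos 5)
Your overall architecture---prove the drawn inclusions, then exhibit one explicit $\vbeli$ per cell of the diagram---is exactly the paper's strategy (the paper argues the theorem ``by showing that each part in the diagram is non-empty,'' deferring the witnesses to the long version). The problem is that several of your non-membership certificates do not hold up, so at least three cells are left unwitnessed. Most concretely, your witness for $\MP(\Nfcyc)\setminus\bigl(\MP(\NMkov)\cup\MP(\NQFG)\bigr)$ fails: the symmetric zero-field Ising model on the $4$-cycle has all four edge marginals equal to one and the same symmetric matrix with uniform singletons, and that belief vector \emph{is} in $\MP(\NQFG)$---take $\rho$ to be the corresponding diagonal (classical) two-qubit density matrix and $U_1=U_2=I$, so that every pair in $\set{K}$ is measured in the computational basis and reproduces exactly that common pairwise marginal. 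Theorem~\ref{Sec0:thm:1} does exclude it from $\MP(\NMkov)$, but the diagram requires a point of $\MP(\Nfcyc)$ outside \emph{both} $\MP(\NMkov)$ and $\MP(\NQFG)$, and you never certify (nor can this witness provide) the second exclusion. The same omission affects the region $\MP(\SNFG)\setminus\MP(\Nfcyc)$, which the diagram splits into a part inside and a part outside $\MP(\NQFG)$: your only tool for excluding a point from $\MP(\NQFG)$ is the bound $|\rCHSHbet|\le 2\sqrt{2}$ of Proposition~\ref{Sec:3:prop:2}, which can only separate points with extremal CHSH value and says nothing about these ``classical-looking'' cells.

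Second, your certificate for $\MP(\NMkov)\setminus\MP(\NQFG)$---``a Markov-chain belief vector with nonuniform singletons''---is not a certificate: $\MP(\NQFG)$ contains many behaviors with nonuniform singletons (any $\rho$ with non-maximally-mixed reduced states). Excluding a given classical $\vbeli$ from $\MP(\NQFG)$ requires an argument about which behaviors are achievable with one computational-basis and one rotated rank-one projective measurement per qubit; none of Theorems~\ref{Sec0:prop:3}, \ref{Sec0:thm:2}, \ref{Sec0:thm:1}, Corollary~\ref{Sec1:coro:3}, or Proposition~\ref{Sec:3:prop:2} speaks to this, and this is precisely where the substantive work of the paper's full proof lies. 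Finally, the step you yourself flag as most delicate---that a relative-interior point of a cycle-inequality face is in $\MP(\SNFG)$ but not in $\MP(\Nfcyc)$---is only sketched: since a measure realizing a face point merely has support \emph{contained} in the face's vertex set, you must show that no per-edge-support (``box-consistent'') subset of that vertex set has the chosen point in the convex hull of its edge statistics, not merely that the vertex set itself is not box-consistent; and the companion claim that a quantum behavior can be tuned to sit at such a point is asserted without any construction. As it stands, the proposal correctly handles the inclusions and the cells reachable by CHSH-value arguments, but does not yet prove the full Venn diagram.
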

\begin{proof}
  See the proof of~\cite[Theorem~49]{Huang2021}.
\end{proof}

We make some remarks on the Venn diagram in Fig.~\ref{Sec1:fig:4}:

\begin{itemize}

	\item On the one hand, the set $ \MP(\NQFG) $ provides extra marginals that are not in $ \MP( \SNFG ) $. For example, by introducing entanglement in the quantum system, one can obtain a set of incompatible marginals (see, e.g.,~\cite[Example~30]{Huang2021}). 

	\item On the other hand, the sets $ \MP(\Nfcyc) $, $ \MP(\NMkov) $, and $ \MP( \SNFG ) $ also consist of marginals that are not in $ \MP(\NQFG) $. 

\end{itemize}

\begin{proposition}\label{Sec:3:prop:2}
 	For $ \vbeli \in \MP(\NQFG) $ satisfying $ 0 < \beli_{i}(0) < 1 $ for all $ i \in  \setvar $, we have 
 	$ |\rCHSHbet|  \leq 2\sqrt{2} $.
\end{proposition}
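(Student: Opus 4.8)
The plan is to exploit Proposition~\ref{Sec:3:prop:6}, which tells us that every $\vbeli \in \MP(\NQFG)$ arises from a genuine quantum measurement scenario: each marginal $\beli_{i,j}(x_i,x_j)$ equals $\Tr\!\bigl( (A_{i,x_i} \otimes B_{j,x_j}) \cdot \rho \cdot (A_{i,x_i} \otimes B_{j,x_j})^{\herm} \bigr)$ with $\{A_{i,x_i}\}_{x_i}$ and $\{B_{j,x_j}\}_{x_j}$ being binary-outcome measurement families on the two qubits and $\rho$ a density matrix. First I would translate each Pearson correlation coefficient $\corr(\vbeli_{i,j})$ into a quantum expectation value. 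Using the Proposition's determinant formula $\corr(\vbeli_{i,j}) = \det(\vbeli_{i,j}) / \sqrt{\det(\vbeli_i)\det(\vbeli_j)}$, and writing the outcomes as $\pm 1$ labels rather than $0/1$, one checks that $\corr(\vbeli_{i,j})$ is exactly the expectation $\langle \hat{A}_i \otimes \hat{B}_j \rangle_\rho$ of a product of two dichotomic observables $\hat{A}_i, \hat{B}_j$ with eigenvalues in $\{-1,+1\}$, \emph{after} the standardizing affine rescaling that the PCC performs on each marginal; equivalently, $\corr(\vbeli_{i,j})$ equals the ordinary two-qubit correlator for suitably normalized (possibly non-projective but still $\pm 1$-bounded) local observables.

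Once that reduction is in place, the statement becomes the standard Tsirelson bound. The key steps, in order: (i) show that the standardized observables $\hat{A}_1, \hat{A}_3$ on the first qubit and $\hat{B}_2, \hat{B}_4$ on the second qubit each have operator norm at most $1$ (this is where $0 < \beli_i(0) < 1$ is used — it guarantees the variances are strictly positive so the PCC is well defined and the standardization does not blow up); (ii) identify $\rCHSHbet$ with the CHSH operator expectation $\langle \hat{A}_1\otimes\hat{B}_2 + \hat{A}_1\otimes\hat{B}_4 + \hat{A}_3\otimes\hat{B}_2 - \hat{A}_3\otimes\hat{B}_4\rangle_\rho$; (iii) invoke Tsirelson's argument — bound the square of the CHSH operator, $(\hat{A}_1\otimes\hat{B}_2 + \hat{A}_1\otimes\hat{B}_4 + \hat{A}_3\otimes\hat{B}_2 - \hat{A}_3\otimes\hat{B}_4)^2 = 4 I - [\hat{A}_1,\hat{A}_3]\otimes[\hat{B}_2,\hat{B}_4]$, whose norm is at most $4 + \|[\hat{A}_1,\hat{A}_3]\|\,\|[\hat{B}_2,\hat{B}_4]\| \le 4 + 4 = 8$, giving operator norm $\le 2\sqrt{2}$ for the CHSH operator itself; (iv) conclude $|\rCHSHbet| \le 2\sqrt{2}$.

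The main obstacle is step (i) together with the first reduction: the local ``observables'' coming out of $\NQFG$ are built from \emph{computational-basis} measurements sandwiched by unitaries $U_1, U_2$, so a priori $\hat{A}_i, \hat{B}_j$ are projective (hence automatically $\pm 1$-valued and norm $1$), \emph{but} the PCC standardization replaces $X_{i}$ by $(X_i - \expect[X_i])/\sqrt{\vari(X_i)}$, which is an affine transformation that depends on the single-qubit marginals $\vbeli_i$. One has to verify carefully that after this rescaling the resulting operator on each qubit still has norm $\le 1$ — equivalently, that $\corr(\vbeli_{i,j})$ can be realized as $\langle \hat{A}_i \otimes \hat{B}_j\rangle$ for genuinely dichotomic $\pm 1$ observables, rather than merely bounded ones. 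The cleanest route is to observe that for a single qubit measured in a fixed basis, the centered-and-scaled outcome variable, \emph{viewed jointly with its partner}, still corresponds to a self-adjoint operator with spectrum in $[-1,1]$ on the relevant subspace; since Tsirelson's $(\cdot)^2$ bound only needs $\|\hat{A}_i\|,\|\hat{B}_j\| \le 1$ and the two-qubit tensor structure (which $\NQFG$ manifestly provides, with $\hat{A}_i$ acting on the first factor and $\hat{B}_j$ on the second), this suffices. The remaining algebra — the commutator identity for the squared CHSH operator and the norm estimate $\|[\hat{A}_1,\hat{A}_3]\| \le 2\|\hat{A}_1\|\|\hat{A}_3\| \le 2$ — is routine. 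I would also remark that, unlike Theorem~\ref{Sec0:prop:3}, here the bound $2\sqrt{2}$ is tight, with the maximizing $\vbeli$ exhibited separately (the ``maximum quantum violation'' mentioned in the contributions), so no strictness is claimed.
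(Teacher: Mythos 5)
Your overall strategy --- rewrite each $\corr(\vbeli_{i,j})$ as a two-qubit correlator of standardized local observables and then invoke a Tsirelson-type bound --- is the right one, and it is essentially what the paper's cited source (Appendix~B of~\cite{Victor2017}) does; the paper itself only gives the citation. However, your resolution of what you correctly identify as ``the main obstacle'' is wrong, and this breaks step~(iii). Writing the projective $\pm 1$-valued observable as $A$ with $a := \Tr(\rho A) \in (-1,1)$, the standardized observable is $\tilde{A} = (A - aI)/\sqrt{1-a^2}$, whose eigenvalues are $\sqrt{(1-a)/(1+a)}$ and $-\sqrt{(1+a)/(1-a)}$. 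One of these exceeds $1$ in absolute value whenever $a \neq 0$, so $\|\tilde{A}\| \le 1$ is simply false in general (your claim that the centered-and-scaled operator ``still has spectrum in $[-1,1]$'' does not hold), and moreover $\tilde{A}^2 \neq I$, so the commutator identity $(\hat{A}_1\otimes\hat{B}_2 + \hat{A}_1\otimes\hat{B}_4 + \hat{A}_3\otimes\hat{B}_2 - \hat{A}_3\otimes\hat{B}_4)^2 = 4I - [\hat{A}_1,\hat{A}_3]\otimes[\hat{B}_2,\hat{B}_4]$ is also not available for the standardized observables. The hypothesis $0 < \beli_i(0) < 1$ only guarantees $|a| < 1$, not $a = 0$.

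The fix is to notice that standardization gives you exactly the \emph{second-moment} normalization $\Tr(\rho\,\tilde{A}^2) = 1$ (indeed $\Tr(\rho (A-aI)^2) = 1 - a^2$ since $A^2 = I$), and to use Tsirelson's bound in its vector form, which needs only this. Concretely, define a real inner product on self-adjoint operators by $\langle M, N\rangle := \re \Tr(\rho\, M N)$, and set $u_i := \tilde{A}_i \otimes I$, $v_j := I \otimes \tilde{B}_j$. Then $\|u_i\| = \|v_j\| = 1$, and $\corr(\vbeli_{i,j}) = \Tr\bigl(\rho\,(\tilde{A}_i \otimes \tilde{B}_j)\bigr) = \langle u_i, v_j\rangle$, so
\begin{align*}
  \rCHSHbet
  = \langle u_1, v_2 + v_4\rangle + \langle u_3, v_2 - v_4\rangle
  \le \|v_2 + v_4\| + \|v_2 - v_4\|
  \le \sqrt{2\bigl(\|v_2+v_4\|^2 + \|v_2-v_4\|^2\bigr)} = 2\sqrt{2},
\end{align*}
by Cauchy--Schwarz and the parallelogram law; the lower bound is symmetric. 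This uses only the tensor-product locality supplied by Proposition~\ref{Sec:3:prop:6} and the unit second moments, not any operator-norm bound. With this replacement your argument is complete; the rest of your reduction (the identity $\corr(\vbeli_{i,j}) = \langle \tilde{A}_i \otimes \tilde{B}_j\rangle_\rho$ under the $\pm 1$ relabeling, and the role of $0 < \beli_i(0) < 1$ in making the standardization well defined) is correct.
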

\begin{proof}
 	See~\cite[Appendix B]{Victor2017}. 
\end{proof}

\begin{proposition}
   Hardy's paradox~\cite{Hardy1992} and Bell's game~\cite{Gisin2014} can be illustrated via the classicable variables induced by the SQMF $ q_{\NQFG}( \tvx ) $. In Bell's game, we have $ \rCHSHbet = 2\sqrt{2} $, which also realizes the maximum quantum violation of $ \rCHSHbet $ as proven in Proposition~\ref{Sec:3:prop:2}.
\end{proposition}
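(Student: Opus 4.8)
The plan is to prove both parts constructively, by exhibiting explicit admissible triples $(\rho,U_1,U_2)$ for $\NQFG$ whose induced classicable marginals realize, respectively, the Hardy configuration and the optimal CHSH configuration. By Proposition~\ref{Sec:3:prop:4} the variables $\tx_i,\tx_j$ are jointly classicable for every $\{i,j\}\in\set{K}$ and every choice of $(\rho,U_1,U_2)$, so classicability needs no separate check; and by Proposition~\ref{Sec:3:prop:6} the induced marginals are exactly the two-qubit measurement probabilities $\beli_{i,j}(x_i,x_j)=\Tr\bigl((A_{i,x_i}\otimes B_{j,x_j})\cdot\rho\cdot(A_{i,x_i}\otimes B_{j,x_j})^{\herm}\bigr)$, where Alice's settings $i=1,3$ are the computational basis and its $U_1$-rotate and Bob's settings $j=2,4$ are the computational basis and its $U_2$-rotate. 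Hence it suffices to pick a two-qubit state and two single-qubit unitaries producing the two desired correlation tables; membership in $\MP(\NQFG)$ is then automatic.

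For Hardy's paradox I would take $\rho=\ketbra{\psi}{\psi}$ with a non-maximally entangled state of the form $\ket{\psi}=\alpha\ket{00}+\beta(\ket{01}+\ket{10})$ (normalized, with vanishing $\ket{11}$ amplitude) and $U_1,U_2$ the standard Hardy rotations. Writing out the four $2\times2$ marginals $\vbeli_{1,2},\vbeli_{1,4},\vbeli_{3,2},\vbeli_{3,4}$, one checks that for a suitable range of $\alpha,\beta$ the probabilities $\beli_{1,4}(1,0)$, $\beli_{3,2}(0,1)$, $\beli_{3,4}(1,1)$ all vanish while $\beli_{1,2}(1,1)>0$. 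These are precisely the Hardy relations for the emerging random variables: on the positive-probability event $\{x_1=1,x_2=1\}$ the three vanishing conditions would force $x_4=1$, then $x_3=1$, then $(x_3,x_4)=(1,1)$, an event of probability zero; hence $\bigl\{ p(\vxI) \bigr\}_{\set{I}\in\set{K}}$ admits no common joint PMF, which is Hardy's paradox realized inside the SQMF $q_{\NQFG}(\tvx)$.

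For the optimal CHSH configuration, which is Bell's game, I would take $\rho=\ketbra{\Phi}{\Phi}$ with $\ket{\Phi}=\tfrac{1}{\sqrt2}(\ket{00}+\ket{11})$ and choose $U_1,U_2$ so that Alice's two effective measurement directions sit at angles $0$ and $\pi/2$ and Bob's at $\pi/4$ and $-\pi/4$. For this state the single-qubit marginals are uniform, so $\det(\vbeli_i)=\det(\vbeli_j)=1/4$, and the determinant identity $\corr(\vbeli_{i,j})=\det(\vbeli_{i,j})/\sqrt{\det(\vbeli_i)\det(\vbeli_j)}$ reduces each PCC to the usual correlator, the cosine of the relevant angle difference: $\corr(\vbeli_{1,2})=\corr(\vbeli_{1,4})=\corr(\vbeli_{3,2})=\cos(\pi/4)=1/\sqrt2$ while $\corr(\vbeli_{3,4})=\cos(3\pi/4)=-1/\sqrt2$. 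Therefore $\rCHSHbet=3\cdot\tfrac{1}{\sqrt2}-\bigl(-\tfrac{1}{\sqrt2}\bigr)=2\sqrt2$. Since $\vbeli\in\MP(\NQFG)$ and Proposition~\ref{Sec:3:prop:2} gives $|\rCHSHbet|\le2\sqrt2$ on all of $\MP(\NQFG)$, this $\vbeli$ attains the maximum quantum violation, completing the proof.

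The substantive work is the explicit algebra: parametrizing $U_1,U_2$ by rotation angles, expanding the traces into the $2\times2$ matrices $\vbeli_{i,j}$, and fixing the orientation conventions so that the minus sign in $\rCHSHbet$ lands on $\corr(\vbeli_{3,4})$. I expect the Hardy part to be the main obstacle, since one must choose $\alpha,\beta$ and the rotation angles so that all three zero conditions hold simultaneously while $\beli_{1,2}(1,1)$ stays strictly positive; once the configuration is pinned down, the rest is a finite, routine computation.
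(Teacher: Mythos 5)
Your overall plan---exhibit explicit triples $(\rho,U_1,U_2)$, read off the classicable marginals via Proposition~\ref{Sec:3:prop:6}, and invoke Proposition~\ref{Sec:3:prop:2} for maximality---is the same constructive route the paper takes, but both of your concrete witnesses fail on the same structural point: in $\NQFG$ the settings $i=1$ and $j=2$ are \emph{pinned} to the computational basis ($A_{1,x_1}=E_{x_1}$, $B_{2,x_2}=E_{x_2}$; only $i=3$, $j=4$ carry $U_1,U_2$), so the only tunable objects are $\rho$, $U_1$, $U_2$, not four independent measurement directions. For Hardy, with $\ket{\psi}=\alpha\ket{00}+\beta(\ket{01}+\ket{10})$ you have $\beli_{1,2}(1,1)=|\langle 11|\psi\rangle|^2=0$ identically, for \emph{every} choice of $U_1,U_2$; your claimed strictly positive probability is therefore zero and the implication chain is vacuous. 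With a zero-$\ket{11}$ state the positive event must sit on the rotated pair and the forced zero on the computational pair, i.e.\ $\beli_{3,4}(1,1)>0$, $\beli_{1,2}(1,1)=0$, with cross zeros $\beli_{3,2}(1,0)=\beli_{1,4}(0,1)=0$; this is exactly the paper's instance $\rho\propto\psi_1\psi_1^{\tran}$ with $\psi_1=(1,1,1,0)$ and $U_1=U_2=H$ (Hadamard). Your assignment would instead require a state with $\langle 11|\psi\rangle\neq 0$ but vanishing amplitude in the $U_1\otimes U_2$-rotated $(1,1)$ direction, which is not the state you wrote down.

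The Bell-game part breaks for the same reason: with $\rho=\ketbra{\Phi}{\Phi}$, $\ket{\Phi}=(\ket{00}+\ket{11})/\sqrt{2}$, the pair $(1,2)$ is measured computational--computational, so $\corr(\vbeli_{1,2})=1$, not $\cos(\pi/4)=1/\sqrt{2}$, and no choice of $U_1,U_2$ affects this entry; in fact, writing the rotated directions as Bloch vectors one finds that for this fixed $\rho$ the maximum of $\rCHSHbet$ over all $U_1,U_2$ is $5/2<2\sqrt{2}$, so your configuration cannot reach the Tsirelson point at all. The repair is to absorb part of the rotation into the state (take $\rho$ a suitably rotated maximally entangled state so that already $\corr(\vbeli_{1,2})=1/\sqrt{2}$, with $U_1,U_2$ supplying the remaining relative angles); the paper instead writes down the target marginals directly (three pairs with entries $(2\pm\sqrt{2})/8$ in the correlated pattern and one anti-correlated, giving PCCs $+1/\sqrt{2},+1/\sqrt{2},+1/\sqrt{2},-1/\sqrt{2}$), proves in its long version that they lie in $\MP(\NQFG)$, and then concludes $\rCHSHbet=2\sqrt{2}$ with maximality from Proposition~\ref{Sec:3:prop:2}, exactly as in your last step. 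So the logical skeleton matches the paper, but both explicit constructions you propose do not realize the claimed marginals and must be replaced along these lines; since you yourself flag the explicit algebra as the substantive work, this is a genuine gap rather than a cosmetic slip.
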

\begin{proof}
  See~\cite[Example~30, Proposition~31]{Huang2021}.
\end{proof}

\clearpage 

\balance
\begin{footnotesize}
 \bibliography{biblio}
\end{footnotesize}

\end{document}